\documentclass[accepted]{uai2023} % for initial submission
% \documentclass[accepted]{uai2023} % after acceptance, for a revised
                                    % version; also before submission to
                                    % see how the non-anonymous paper
                                    % would look like
%% There is a class option to choose the math font
% \documentclass[mathfont=ptmx]{uai2023} % ptmx math instead of Computer
                                         % Modern (has noticable issues)
% \documentclass[mathfont=newtx]{uai2023} % newtx fonts (improves upon
                                          % ptmx; less tested, no support)
% NOTE: Only keep *one* line above as appropriate, as it will be replaced
%       automatically for papers to be published. Do not make any other
%       change above this note for an accepted version.

%% Choose your variant of English; be consistent
\usepackage[american]{babel}
% \usepackage[british]{babel}

%% Some suggested packages, as needed:
\usepackage{natbib} % has a nice set of citation styles and commands
    \bibliographystyle{plainnat}
    
\usepackage{mathtools} % amsmath with fixes and additions
\usepackage{booktabs} % commands to create good-looking tables
\usepackage{tikz} % nice language for creating drawings and diagrams

%% Provided macros
% \smaller: Because the class footnote size is essentially LaTeX's \small,
%           redefining \footnotesize, we provide the original \footnotesize
%           using this macro.
%           (Use only sparingly, e.g., in drawings, as it is quite small.)
\usepackage{subfig}
\usepackage{comment}
\usepackage{amsmath,amssymb,amsfonts,amsthm}
\usepackage{algorithmic}
\usepackage{algorithm}
\usepackage{wrapfig}
\usepackage{dsfont}
\usepackage{multirow}
\usepackage{tabularx}
%% Self-defined macros
 % just an example

\newtheorem{proposition}{{\bf Proposition}}
\newtheorem{remark}{{\bf Remark}}

\title{Inference and Sampling of Point Processes from Diffusion Excursions}

% The standard author block has changed for UAI 2022 to provide
% more space for long author lists and allow for complex affiliations
%
% All author information is authomatically removed by the class for the
% anonymous submission version of your paper, so you can already add your
% information below.
%
% Add authors
\author[1,2]{\href{mailto:<ali.hasan@duke.edu>?Subject=Your UAI 2022 paper}{Ali~Hasan}{}}
\author[2]{Yu~Chen}
\author[1]{Yuting~Ng}
\author[3]{Mohamed Abdelghani}
\author[2]{Anderson~Schneider}
\author[1]{Vahid~Tarokh}
% Add affiliations after the authors
\affil[1]{%
    Department of Electrical and Computer Engineering\\
    Duke University\\
    Durham, North Carolina, USA
}
\affil[2]{%
    Machine Learning Research\\
    Morgan Stanley
}

\affil[3]{%
Department of Mathematics\\
    University of Alberta\\
    Edmonton, Alberta, Canada
}

\usepackage{xr}
\makeatletter

\newcommand*{\addFileDependency}[1]{% argument=file name and extension
\typeout{(#1)}% latexmk will find this if $recorder=0
% however, in that case, it will ignore #1 if it is a .aux or 
% .pdf file etc and it exists! If it doesn't exist, it will appear 
% in the list of dependents regardless)
%
% Write the following if you want it to appear in \listfiles 
% --- although not really necessary and latexmk doesn't use this
%
\@addtofilelist{#1}
%
% latexmk will find this message if #1 doesn't exist (yet)
\IfFileExists{#1}{}{\typeout{No file #1.}}
}\makeatother

\newcommand*{\myexternaldocument}[1]{%
\externaldocument{#1}%
\addFileDependency{#1.tex}%
\addFileDependency{#1.aux}%
}

%------------End of helper code--------------

\myexternaldocument{uai2023-supplement}

\begin{document}
\maketitle

\begin{abstract}
Point processes often have a natural interpretation with respect to a continuous process.
We propose a point process construction that describes arrival time observations in terms of the state of a latent diffusion process.
In this framework, we relate the return times of a diffusion in a continuous path space to new arrivals of the point process. 
This leads to a continuous sample path that is used to describe the underlying mechanism generating the arrival distribution.
These models arise in many disciplines, such as financial settings where actions in a market are determined by a hidden continuous price or in neuroscience where a latent stimulus generates spike trains.
Based on the developments in It\^o's excursion theory, we propose methods for inferring and sampling from the point process derived from the latent diffusion process. 
We illustrate the approach with numerical examples using both simulated and real data.
The proposed methods and framework provide a basis for interpreting point processes through the lens of diffusions. 

\end{abstract}

\section{Introduction}
Point processes are a powerful modeling tool for describing patterns of arrivals, with applications ranging from environmental and biological sciences to financial markets and social behavior~\citep{bjork1997bond, rizoiu2017hawkes, subramanian2020point,stoyan2000recent}. 
Often, a point process is represented through an \emph{intensity function}, which is a function that describes the expected number of arrivals.
This function is the primary mechanism for interpreting the properties of the process, with standard models such as the Poisson process and the Hawkes process as primary examples. 
However, considering only the intensity function may not provide a complete understanding of the underlying cause of arrivals of points

\begin{figure}
\centering
\includegraphics[width=0.48\textwidth]{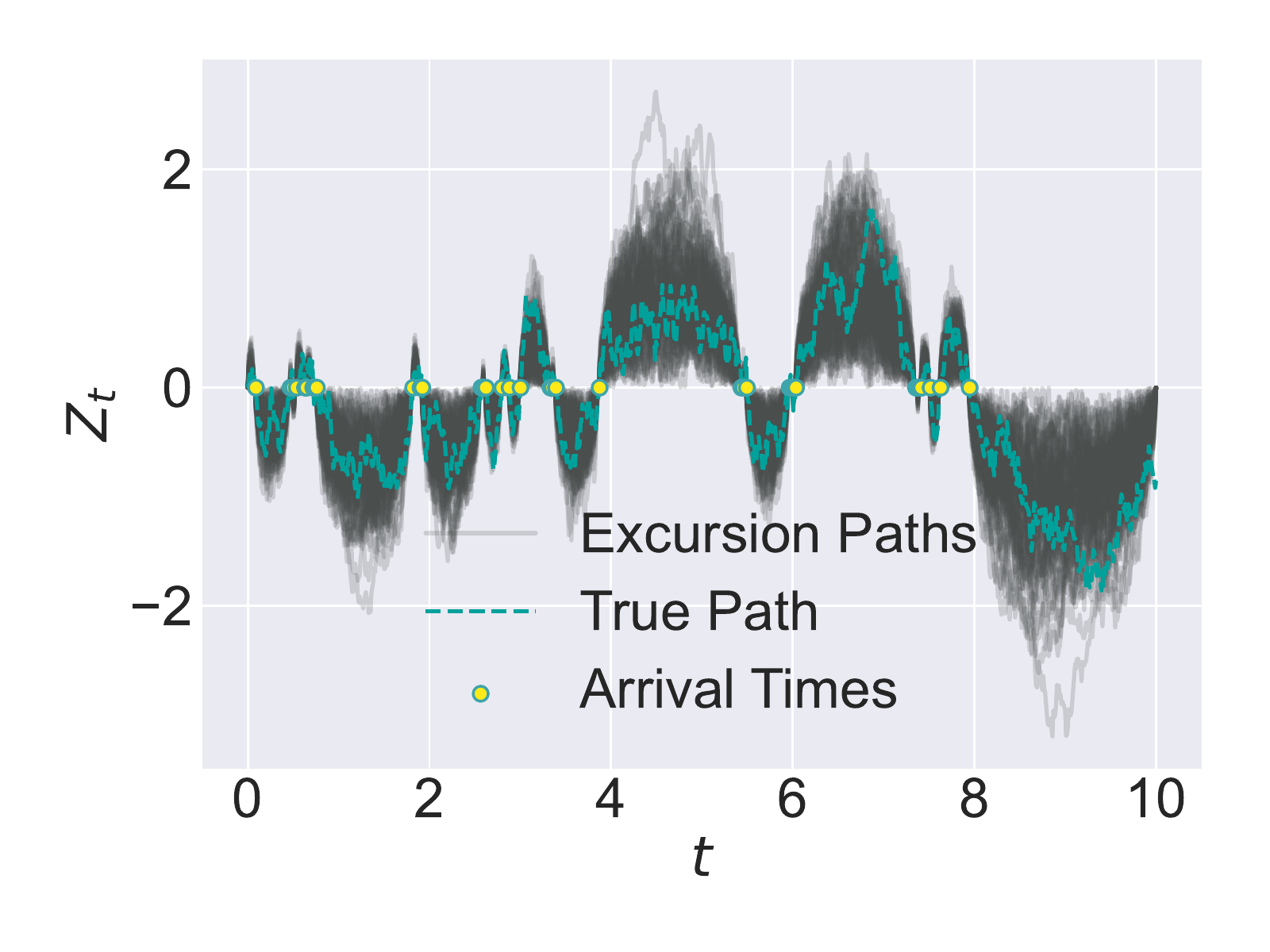}
\caption{Decomposition of a sample path into distributions of Brownian excursions. Dashed line represents the true signal and solid lines represent possible excursions between observations of arrival times indicated by circle markers.}
\label{fig:decomp}
\vspace{-10pt}
\end{figure}

In many cases, a point process may be related to a \emph{continuous process}. 
The choice of the continuous process is also motivated by applications. 
To list a few examples, neuron spike trains may be related to the first passage time of an underlying chemical concentration surpassing a threshold~\citep{sacerdote2003threshold}.
Similarly, intracellular events are considered to be a function of a protein concentration exceeding a threshold and bursty transcription relates to the continuous movement of underlying molecules~\citep{ghusinga2017first, lammers2020matter}.
In an economic setting, one can think of information flow in a market, and the ensuing point process generated by orders of agents on an exchange as a function of the information flow~\citep{babus2018trading}.
All of these models consider a multi-scale approach such that the point process is generated as a function of the unobservable continuous process.
Developing inference methods for recovering a possible continuous process could provide additional insights into the point process being studied. 
%In the gene expression example, targeted therapies could be developed to control transcription at the molecular level through knowledge of the continuous process versus the point process. 
This leads us to the motivation of this work, where we focus on continuous stochastic processes defined by It\^o diffusions and relate these paths to the arrival times of the point process. 
Specifically, we consider a decomposition into paths known as \emph{excursions} -- paths that begin and end at a reference state and are constrained to stay above or below the reference state for their entirety. The length of an excursion correspond to an interarrival time of the point process.
This decomposition is illustrated in Figure~\ref{fig:decomp} where the sample path is decomposed into excursions from the reference state of 0, marked by arrivals of the point process.
The original idea for constructing such a process was introduced by~\citet{ito1972poisson}, where the decomposition of sample paths into excursions was used as an alternative tool to stochastic calculus for studying diffusions.
It\^o additionally raised the question of what continuous process could represent observations of arrival times, which is what we contribute towards in this work~\citep{watanabe2010ito}. 

\paragraph{Motivating Example}
Consider events in a market where individuals are buying and selling a group of assets by placing bids and asks.
We can model this as a marked point process where the mark is given by the type of action (bids or asks) and the price at which the asset is requested.
Then, we can assume that the different arrivals are associated with an unobserved, fair price that governs the asset --- bids are placed when the asset is below its fair price and asks are placed when the asset is above the fair price. 
These \emph{excursions} above and below the fair price give rise to the point process structure, and that is what we aim to model through this framework. 
The multi-dimensional case then corresponds to the point process of multiple, correlated assets in a market.
The parameters recovered then correspond to the diffusion that models the unobserved fair price.

\subsection{Related Work}

A number of research directions are related to understanding point processes through continuous processes, and, in the particular case of Brownian motion, the most relevant is the study of first hitting times (FHT).
The FHT problem was originally posed by Albert Shiryaev regarding whether there exists a boundary such that the stopping time of a Brownian motion at such a boundary is distributed according to an exponential distribution~\citep{potiron2021existence}.
\citet{anulova1981markov} answered the question affirmatively for a series of barriers but did not make any consideration of the regularity of the barriers. 
Theoretical investigation of this problem has led to numerous computational approaches for computing the FHT density for arbitrary boundaries (e.g.~\citep{jaimungal2014generalized,zucca2009inverse}). 
A feasible computational solution for estimating the drift of a process with a desired FHT density was provided by~\citet{ichiba2011efficient} who described a representation for the distribution in terms of an expectation but did not consider the problem of estimation. 
While these works consider methods of finding the appropriate boundary, they deviate from our goal of recovering a \emph{continuous} latent process due to the fact that the state of the process must be reset after the arrival of each point.
Since excursions begin and end at the same location, studying the excursion distribution allows one to reconstruct the full continuous path. 
We discuss the similarities between the proposed method and the FHT problem in greater detail in Section~\ref{sec:fht}.

The second relevant line of work is based on It\^o's description of Brownian paths through excursions~\citep{ito2020poisson}.
Using this framework, It\^o described point processes over the space of excursions as a technique for analyzing properties of diffusions.
\citet{watanabe1987construction} considered a mathematical construction of semimartingales through their excursions. 
Recent approaches considered further analyses and applications in finance~\citep{ananova2020excursion}.
Closely related is the Az\'ema martingale which provides an estimate of the value of a Brownian motion when only observing the sign of its excursion~\citep{ccetin2012filtered}.
This has found applications in pricing Parisian options and estimating firm default risk~\citep{ccetin2012filtered}.
See~\citet{watanabe2010ito} for a more comprehensive history on excursions and the development of It\^o's excursion theory. 
However, the related question regarding estimating a diffusion from its excursion lengths has not yet been answered through a computationally feasible framework, which is the main purpose of this work.
A closely related line of work concerns filtering problems where the observation is a point process with an intensity function given by a diffusion. 
A Cox process can be understood as a Poisson process with a stochastic intensity function. 
\citet{jaiswal2020variational} described a method for computing the posterior distribution of a Cox process with intensity given by diffusion through solving a stochastic PDE.
These methods can also be seen as filtering problems where the observation model is a point process with a latent continuous process.
Unfortunately, the approach generally requires solving computationally intractable equations for computing the posterior. 
To mitigate this issue~\citet{lloyd2015variational} describes modulating the intensity function with a Gaussian process and describes a variational inference approach for optimizing the parameters.

Other applications of these ideas have also been considered, particularly in the case of the FHT.
In survival modeling, the first hitting time of a diffusion at some region of the domain is used to determine the end of the life of a particular process.
\citet{roberts2010latent} proposes a method for recovering diffusions based on survival data with the assumption of underlying diffusion. 
The approach is based on a Markov-Chain Monte Carlo method for estimating the posterior density given the survival times and the hazard function is estimated with respect to the diffusion.
The continuous process can represent the state of, for example, an engine throughout its lifetime, and can be useful for gaining an interpretation of the stresses faced by the engine.
\citet{maystretemporally} considered modeling the survival distribution in terms of the first hitting time of a discrete time Markov process and related this procedure to a temporal difference learning problem in reinforcement learning. 

A further distinction between intensity-based models with stochastic intensity and excursion-based models is that given the underlying continuous process, the time stamps are sampled randomly in the former, while the time stamps are deterministic in the latter.
In other words, the intensity-based model usually involves \textit{doubly stochastic process}, while the excursion-based model does not.

\subsection{Contributions}
We propose a modeling framework based on It\^o's excursion theory that represents a point process over the line as a decomposition of a diffusion in terms of excursions where the excursion length corresponds to the interarrival time.
Our contributions are then the following:
\begin{enumerate}
    \item 
    We extend the point process modeling framework based on the diffusion process, where the time stamps are determined by excursions; 
    
    \item
    We provide an inference algorithm for the model; 
    
    \item 
    We demonstrate the versatility of the framework by presenting applications to many classes of distributions. 

    \item We illustrate the framework's utility and interpretability on a variety of synthetic and real data experiments.

\end{enumerate}

\section{Background}
To provide the initial exposition, we will assume our class of continuous processes are solutions of the one-dimensional stochastic differential equations (SDEs) driven by Wiener processes. 
We suppose that the latent process $Z_t$ is the solution to the SDE given by
\begin{equation}
    \label{eq:sde}
\mathrm{d} Z_t = \mu(Z_t, t) \mathrm{d} t + \sigma(t) \mathrm{d} W_t
\end{equation}
where $W_t$ is a standard Brownian motion.
The object of interest in this work is to model the drift function $\mu$.

\subsection{Definitions}
Here we provide the definitions that link the point process to the diffusion in~\eqref{eq:sde}.
Our overall goal in estimation is to find a $\mu$ such that the corresponding excursion length distribution is the same as the interarrival time distribution of the point process. 
Therefore, we will study the properties of excursions to describe the method.
We follow the terminology of~\citet{pitman2007ito} to introduce the definitions and defer to that manuscript for a more comprehensive study on the implications of Brownian excursions.

Consider a sample path $Z_t$ satisfying~\eqref{eq:sde}.
An excursion set can be thought of as the subsets of $Z_t$ that exceed a particular function $f(t)$.
The length of the excursion is then related to the times that $Z_t$ first hits and surpasses $f(t)$ and the time that $Z_t$ returns to $f(t)$.
Define the set of hitting times by
\begin{equation}
\mathbb{H}_t := \left\{\sup_{r \in [0, s] } \{ r \mid Z_r = f(r) \} \: \bigg |  \: s < t \right \}
\label{eq:zeros}
\end{equation}
and then consider the \emph{local time} at $f(t)$ as
$$
L_t = \lim_{\epsilon \to 0 }\int_0^t\frac1{2\epsilon} \mathds{1}_{|Z_s - f(s)| < \epsilon}\mathrm{d}s.
$$
The local time is an increasing function that, heuristically, describes the amount of time the process $Z_s, s < t$ has spent at $f(s), s < t$ up to time $t$.
We next define the \emph{inverse local time} as
$\tau_\ell = \inf\{t >0 : L_t > \ell \}, \ell \geq 0$ which describes the time at which $Z_t$ has spent $\ell$ time at $f(t)$.
An excursion straddling $(\tau_{\ell^-}, \tau_{\ell})$ is then defined as 
\begin{equation}
    e^\ell := \{Z_s : s \in ( \tau_{\ell^-}, \tau_{\ell}) \}.
\end{equation}
where $\tau_{\ell^-}$ is the left-sided limit of the inverse local time.
We note that the space of all excursions is not relevant for our purposes of modeling due to the topological properties of~\eqref{eq:zeros}.
To give an example, taking $Z_t$ as standard Brownian motion starting at zero and $f(t)=0$ results in $\mathbb{H}_1$ being a perfect set with properties that are not practical for a modeling task.
From an applied perspective, very small excursions would not be observed due to limitations on the resolution of measuring devices used to collect data.
Instead, one usually considers a subset of excursion paths that have some relevance, such as excursions of minimum length or minimum height\footnote{This is the interpretation of the excursion measure given by D. Williams, see~\citet[Chapter 6]{yen2013local} for a detailed description.}.
In~\citet{ananova2020excursion}, excursions reaching a minimum height of $\delta$, described as $\delta$-excursions, were considered. 
Under this construction, $e^\ell$ allows us to decompose continuous sample paths given by $Z_t$ into different excursions with excursion lengths indexed by $\ell$.
This generates a Poisson process where excursion lengths define the interarrival times of the point process.

To illustrate these concepts, we again refer to  Figure~\ref{fig:decomp} where the original sample path representing $Z_t$ (blue) is decomposed into excursions above and below the line $f(t) = 0$.
The arrival times (yellow circles) describe the end of an excursion, the last point in $\mathbb{H}_t$. 
Finally, multiple samples of excursions (grey) with length $\tau_{\ell} -\tau_{\ell^-}$ are illustrated to describe the relationship to the true excursion of $Z_t$.
For the remainder of the text, we will suppose that $f(t) = 0 $ for all $t$ and consider the set of times $\mathbb{H}_t$ when $Z_t$ returns to 0.

\subsection{Assumptions }
We state a few more properties to ensure we can compute valid excursion densities.
These are conditions on the drift $\mu$ so that the interarrival time of the excursion is finite, which in turn guarantees that the measure is a valid density.
\begin{enumerate}
    \item The diffusion must be recurrent; i.e. $\mathbb{P}(\tau = \infty) = 0$. This is guaranteed if $\lim_{a\to \infty} S(a,t) = \infty$ and  $\lim_{a\to -\infty} S(a,t) = -\infty$ where 
    $$
    S(a,t) = \int_0^a \exp \left( \int_0^b \frac{-2 \mu(x,t)}{\sigma(x,t)} \mathrm{d}x \right) \mathrm{d}b
    $$
    for all $t$, 
    
    \item The measure counting the number of excursions must be finite; that is, we do not consider excursions that have negligible length. 
    
    \item Novikov's condition
    $\mathbb{E}[e^{\frac12 \int_0^\tau |Z_t|^2 \mathrm{d} t} ] < \infty$ for Girsanov's theorem to hold.
    
    \item Existence of a $t$-continuous strong solution to \eqref{eq:sde}, that is $\mu, \sigma$ are Lipshitz \citep[Theorem 5.2.1]{oksendal2003stochastic}.
    
\end{enumerate}

A final assumption that we consider is that $\sigma(t) = 1$.
This is not necessary, but as noted later in the text, by introducing an additional parameter $\delta$ regarding the minimum height of the excursion, there exists an estimation ambiguity between the $\sigma$ and $\delta$ parameters.
To circumvent the estimation ambiguity, we consider recovering the transformed process given by the Lamperti transform which results in the diffusion with unit volatility. 
Additional details are presented in Appendix~\ref{sec:lamperti}.
We also note that the drift $\mu$ can depend on history or on an additional process, but we leave the drift in its standard form for ease of exposition.

\section{Method}
We now describe the inference method for finding $\mu$ given a set of interarrival times.
We define the interarrival times as the set $\mathbb{T}_t = \{\tau_1, \ldots, \tau_N\}$ and relate them to the set $\mathbb{H}$ by $\tau_i = \mathbb{H}^{(i+1)}_t - \mathbb{H}^{(i)}_t$, $\mathbb{H}^{(i)}_t$ being the $i^\text{th}$ arrival time in ascending order up to time $t$. 
For example, $\mathbb{T}_t$ would contain elements that are exponentially distributed in the case of a Poisson process.
As mentioned in the previous section, we remove small excursions by only considering excursions with a minimum height by redefining $\mathbb{H}_{t, \delta} = \{ \tau_i \in \mathbb{H}_t \; | \; \max_{s\in (\tau_{i-1} , \tau_i)} Z_s - f(s) \geq \delta \}$ in~ \eqref{eq:zeros} for some $\delta > 0$.
We consider a minimum height so that the density remains absolutely continuous with respect to the Lebesgue measure on the positive real line.
To outline the method, we first state the excursion length distribution of standard Brownian motion with minimum height $\delta$.
We then perform a change of measure to find the excursion length distribution of a diffusion with drift given by $\mu$.
We represent the drift $\mu$ by a neural network and optimize for its parameters via stochastic gradient descent and maximum likelihood estimation on observations of excursion lengths given by interarrival times. 
In the remaining of the text, we will denote a general excursion as $e$ and the excursion at time $t$ as $e_t$.

\subsection{Excursion Length Density of Brownian Motion}
Excursion times from 0 to $\delta$ and back to 0 have the distribution given by the inverse Laplace transform of
$
\mathbb{E}[e^{-\lambda \tau}] = e^{-2\sqrt{2 \lambda} \delta}.
$
Taking the inverse Laplace transform, we obtain a zero shifted L\'evy distribution with scale parameter as $4 \delta^2$ and PDF of
\begin{equation}p_e(\tau; \delta) = \delta \sqrt{\frac2{\pi \tau^3}}\exp{\left(-\frac{2\delta^2}{\tau}\right)} .
\label{eq:hitting_bm}
\end{equation}
Additional details regarding this derivation are provided in Appendix~\ref{sec:laplace}.
Note that if $\sigma$ is included in the density of~\eqref{eq:hitting_bm} then $\delta$ and $\sigma$ are unidentifiable, motivating the previously stated assumption on $\sigma=1$. 
In some cases, it may be easier to optimize one than the other, e.g. when simulating excursions with variance $\sigma$ is easier than excursions of a minimum height $\delta$, but we will focus on a minimum height $\delta$.
With the Brownian excursion length density in mind, we now consider a change of measure for the drifted case.

\subsection{Change of Measure for Excursion Length Density of Diffusion}
We consider an approach inspired by~\citet[Section A.1]{ichiba2011efficient} where the authors use a change of measure technique to compute the density of the FHT of a diffusion. 
Let $e_t$ be the value of a Brownian excursion of length $\tau$ at time $t$.
The excursion length density of a diffusion follows an expectation of a Radon-Nikodym derivative between the base measure on the space of $\delta$-excursions $\mathbb{Q}_\updownarrow^\delta$ and the diffusion measure $\mathbb{P}^\mu$.

\begin{proposition}[Diffusion Excursion Density]
Let $Z_t$ satisfy an SDE with drift $\mu$ such that $Z_t$ is recurrent at zero.
Then the density of the excursion lengths $\tau$ of $Z_t$ is given by:
\begin{align}
\nonumber &p_{Z}(\tau) = p_{e} (\tau; \delta) \times \\ &\mathbb{E}_{\mathbb{Q}_\updownarrow^\delta}\Bigg[\exp\bigg(\int_0^\tau \mu(e_t,t;\theta)\mathrm{d}e_t  - \frac12 \int_0^\tau \mu^2(e_t,t;\theta) \mathrm{d}t \bigg) \Bigg].
\label{eq:girsanov_one_dim}
\end{align}

\end{proposition}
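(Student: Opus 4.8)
The plan is to obtain the diffusion excursion length density as a Girsanov reweighting of the Brownian excursion length density $p_e(\tau;\delta)$ from \eqref{eq:hitting_bm}, following the change-of-measure strategy of \citet{ichiba2011efficient} but carried out on the space of $\delta$-excursions rather than on ordinary path space. Since the assumption $\sigma(t)=1$ reduces \eqref{eq:sde} to $\mathrm{d}Z_t = \mu(Z_t,t)\,\mathrm{d}t + \mathrm{d}W_t$, the law of $Z$ differs from that of a driftless Brownian motion only through its drift, so the two measures are mutually absolutely continuous on any finite horizon. First I would make precise that under the base measure $\mathbb{Q}_\updownarrow^\delta$ the canonical process $e_t$ is a driftless Brownian excursion of minimum height $\delta$, so that $\mathrm{d}e_t$ plays the role of the driving noise, and that the Radon--Nikodym derivative of $\mathbb{P}^\mu$ with respect to $\mathbb{Q}_\updownarrow^\delta$, restricted to the excursion run up to its length $\tau$, is the Dol\'eans--Dade exponential
\[
M_\tau = \exp\!\left( \int_0^\tau \mu(e_t,t;\theta)\,\mathrm{d}e_t - \tfrac12 \int_0^\tau \mu^2(e_t,t;\theta)\,\mathrm{d}t \right).
\]
Novikov's condition (Assumption~3) guarantees that this stochastic exponential is a true martingale with unit mean, so that the change of measure is legitimate, while recurrence (Assumption~1) ensures $\tau<\infty$ almost surely so that the length density is well defined.

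The core of the argument is then a test-function computation combined with a disintegration over the length coordinate. For an arbitrary bounded measurable $F$, I would write
\[
\mathbb{E}_{\mathbb{P}^\mu}\!\big[F(\tau)\big] = \mathbb{E}_{\mathbb{Q}_\updownarrow^\delta}\!\big[F(\tau)\,M_\tau\big] = \mathbb{E}_{\mathbb{Q}_\updownarrow^\delta}\!\Big[F(\tau)\,\mathbb{E}_{\mathbb{Q}_\updownarrow^\delta}\!\big[M_\tau \mid \tau\big]\Big],
\]
using the change of measure in the first step and the tower property in the second. Since under $\mathbb{Q}_\updownarrow^\delta$ the length $\tau$ has density $p_e(\tau;\delta)$, the right-hand side equals $\int F(s)\,\mathbb{E}_{\mathbb{Q}_\updownarrow^\delta}[M_\tau \mid \tau=s]\,p_e(s;\delta)\,\mathrm{d}s$. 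Because this identity holds for every $F$, the density of $\tau$ under $\mathbb{P}^\mu$ must be $p_e(\tau;\delta)\,\mathbb{E}_{\mathbb{Q}_\updownarrow^\delta}[M_\tau\mid\tau]$, which is exactly \eqref{eq:girsanov_one_dim}; here the expectation is read as an average over the Brownian excursion \emph{bridge} of length $\tau$, consistent with the description of $e_t$ as the value of an excursion of that length.

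The hard part will be justifying the Girsanov change of measure on the excursion space itself rather than on the ambient path space. It\^o's excursion measure is only $\sigma$-finite, and the excursion straddling a given local-time level is recovered through a delicate conditioning and limiting procedure, so I would need to verify that restricting to $\delta$-excursions (Assumption~2) renders the relevant measure finite and normalizable, and that the Radon--Nikodym derivative between the diffusion and Brownian excursion laws is genuinely the same exponential $M_\tau$. The subtlety is that an excursion is a Brownian motion conditioned both to keep one sign and to return to $0$, i.e.\ an $h$-transform, and one must check that the drift change commutes with this conditioning. I expect this to follow from the fact that the conditioning acts only through the endpoints and the height constraint, which are preserved under the locally absolutely continuous drift change, so that the Girsanov density is unaffected by the $h$-transform; making this rigorous is where most of the technical care is required.
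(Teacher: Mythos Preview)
Your proposal is correct and follows essentially the same approach as the paper: a Girsanov change of measure on the $\delta$-excursion space, then conditioning on the length $\tau$ to factor out $p_e(\tau;\delta)$. The paper's proof is in fact terser---it writes $\mathbb{P}^{\mu,\delta}(\tau\in\mathrm{d}t)=\mathbb{E}_{\mathbb{Q}_\updownarrow^\delta}[M_\tau\,\mathds{1}(\tau\in\mathrm{d}t)]$ and factors the indicator directly---whereas you phrase the same step via test functions and the tower property and additionally flag the $h$-transform/commutation subtlety, which the paper does not address.
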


\begin{proof}[Intuition of Proof]
    The proof follows a change of measure argument. 
    The full proof is in Appendix~\ref{sec:proof_com}.
\end{proof}

During optimization, $p_e$ does not need to be computed since it is a constant with respect to the input data $Z_t$.
However, if we consider $\delta$ as a parameter for the optimization, we may include it in the computation of the likelihood. 
As noted, we may also consider non-unit $\sigma$, but this comes at the expense of identifiability of $\delta$. 
This results in a modification of the expectation over sample paths which should have the corresponding $\sigma$.   
Finally, we require that the drift must be recurrent for the density to integrate to 1. 
Numerically, we enforce this condition by adding a regularizer that constrains the density to approximately integrate to 1.

\paragraph{Evidence Lower Bound}
Following Jensen's inequality, for maximizing~\eqref{eq:girsanov_one_dim} for given data, we can also optimize
\begin{align}
\nonumber \log p_z(\tau; \delta) 
& \geq \log p_e(\tau; \delta) + \\ & \mathbb{E}_{\mathbb{Q}_\updownarrow^\delta}\left[\int_0^\tau \mu(e_t,t;\theta)\mathrm{d}e_t - \frac12 \int_0^\tau \mu^2(e_t,t;\theta) \mathrm{d}t \right] \label{eq:elbo}
\end{align}

and therefore we need only to maximize~\eqref{eq:elbo} rather than~\eqref{eq:girsanov_one_dim}.
This may be beneficial in scenarios where numerical errors may make the calculation of the exponential unstable.
We detail an algorithm for estimating the drift $\mu$ from data in Algorithm~\ref{alg:mle}.

\begin{algorithm}[!ht]
	\caption{Inference for latent diffusion from arrival times}
	\label{alg:mle}
	\begin{algorithmic}[1]
	\STATE \textbf{Input:} Sequences of interarrival times: $\mathbb{T} = \left\{ \tau_1^{(j)}, \tau_2^{(j)}, \ldots, \tau_{n_j}^{(j)}\right\}_{j=1}^N$
	\STATE \textbf{Initialize Parameters:} Parameters of drift $\mu(x,t)$, step size $\Delta t$, total time $T$, initial state $X_0$, minimum height $\delta$ and variance $\sigma$, number of points in expectation $K$.
	\STATE Sample $K$ Brownian excursions using Vervaat transform~\citep{vervaat1979relation} of a Brownian bridge and the Euler-Maruyama method $\mathbb{E} = \{e_i\}_{i=1}^K$.
	\STATE Filter $\mathbb{E}$  by discarding $e_i$ where $\max e_i < \delta$.
	\STATE Numerically compute~\eqref{eq:elbo} for the data $\mathbb{T}$ with excursions $\mathbb{E}$ computed above. 
    \STATE Maximize~\eqref{eq:elbo} using gradient decent with respect to the parameters of $\mu$ and $\delta$.
    \STATE Repeat for $N$ iterations
	\end{algorithmic} 
\end{algorithm}

\subsection{Multidimensional Processes}
\label{sec:multi}
\begin{figure}
    \centering
    \includegraphics[trim=150pt 0pt 100pt 0pt, width=0.4\textwidth]{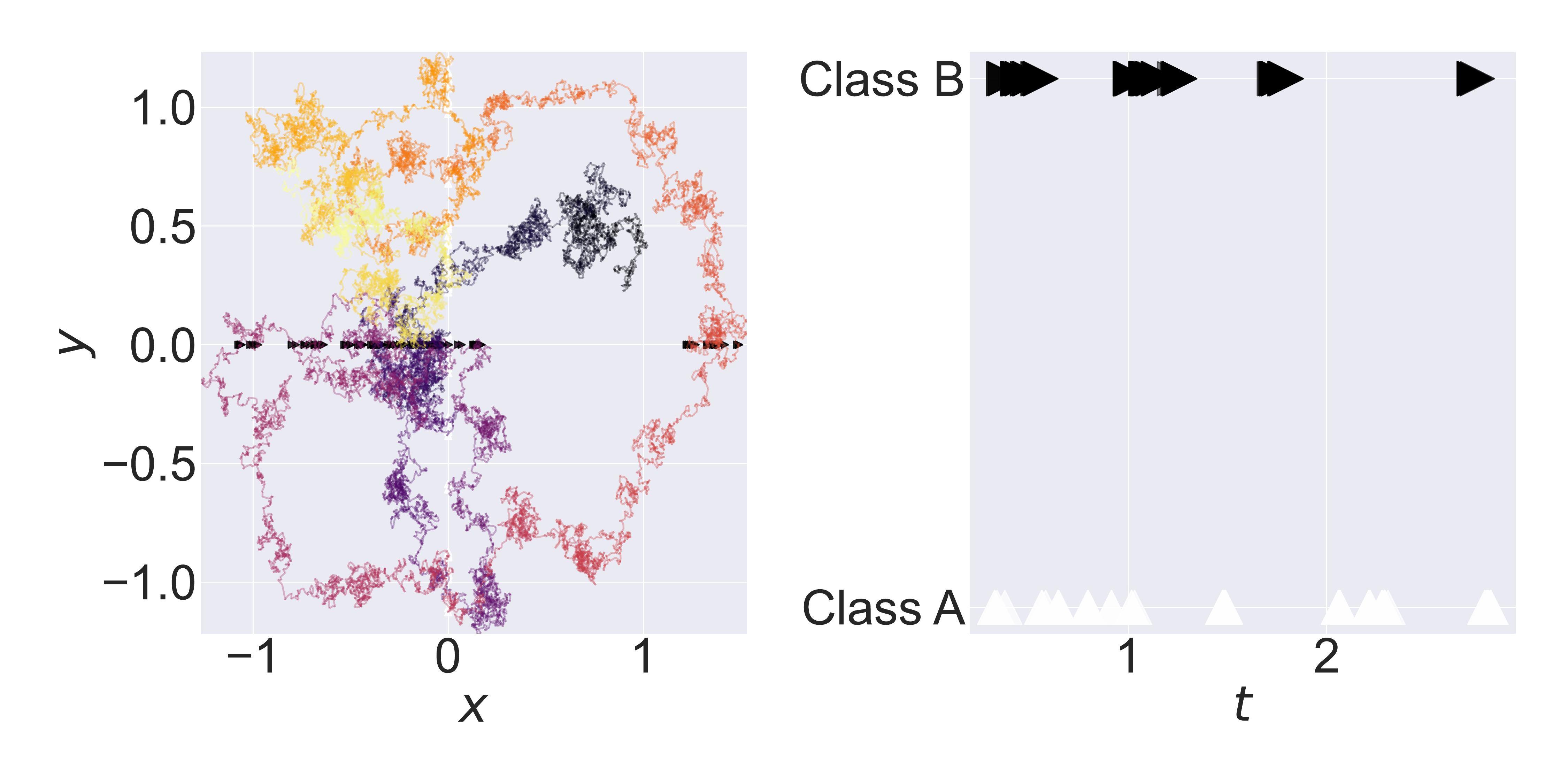}
    \caption{Schematic of the multidimensional framework. Left: the 2-dimensional latent diffusion crosses the axes producing points in the point process. Right: arrival times of the point process generated by the diffusion process on the left.
    Up triangles represent excursions from the $y$-axis and left triangles represent excursions from the $x$-axis.
    }
\label{fig:schematic}
\end{figure}
The approach extends to multidimensional diffusions that have interacting components, where we consider excursions away from each axis. 
The idea is illustrated in Figure~\ref{fig:schematic}, where a diffusion in the 2-D plane generates excursions from either axis, each axis corresponding to a different mark.
This leads to a dependence between the two classes through the drift function.
Our approach for calculating the corresponding $\mu$ from the data is analagous to the 1-dimensional case where we compute expectations over Brownian excursions with the same lengths as the interarrival time observations and repeat over each dimension.
In this case, $Z_t$ is a $d$-dimensional diffusion.
From there, the likelihood of the data is maximized using a single multi-dimensional drift function that governs the relationships between the different marked processes.
To estimate the drift from observations of the interarrival times of different coordinates, we compute the change of measure as in~\eqref{eq:girsanov_one_dim} jointly over all components: 
\begin{align}
\nonumber & p_{Z}(\tau^{(1)}, \ldots, \tau^{(d)}) = \prod_{k=1}^d p_{e}(\tau^{(k)}; \delta) \\
\nonumber &\mathbb{E}_{\mathbb{Q}_\updownarrow}\Bigg[\exp\bigg(\int_0^{\left( \bigvee_{k=1}^d \tau^{(k)} \right) \bigwedge T} \mu({\bf e}_t,t;\theta)\mathrm{d}{\bf e}_t \\& \quad \quad - \frac12 \int_0^{\left( \bigvee_{k=1}^d \tau^{(k)} \right) \bigwedge T} \mu^{\dagger}\mu({\bf e}_t,t;\theta) \mathrm{d}t \bigg) \Bigg]
\label{eq:girsanov_nd}
\end{align}

where ${\bf e}_t$ is a multidimensional excursion process with the excursion length of the $i^\text{th}$ component being $\tau^{(i)}$.
Specifically, ${\bf e}_t$ has zeros only at the time points where the corresponding component has a realization. 

\section{Simulating Point Processes}
\begin{figure}
    \centering
    \vspace{-20pt}
    \includegraphics[width=0.23\textwidth]{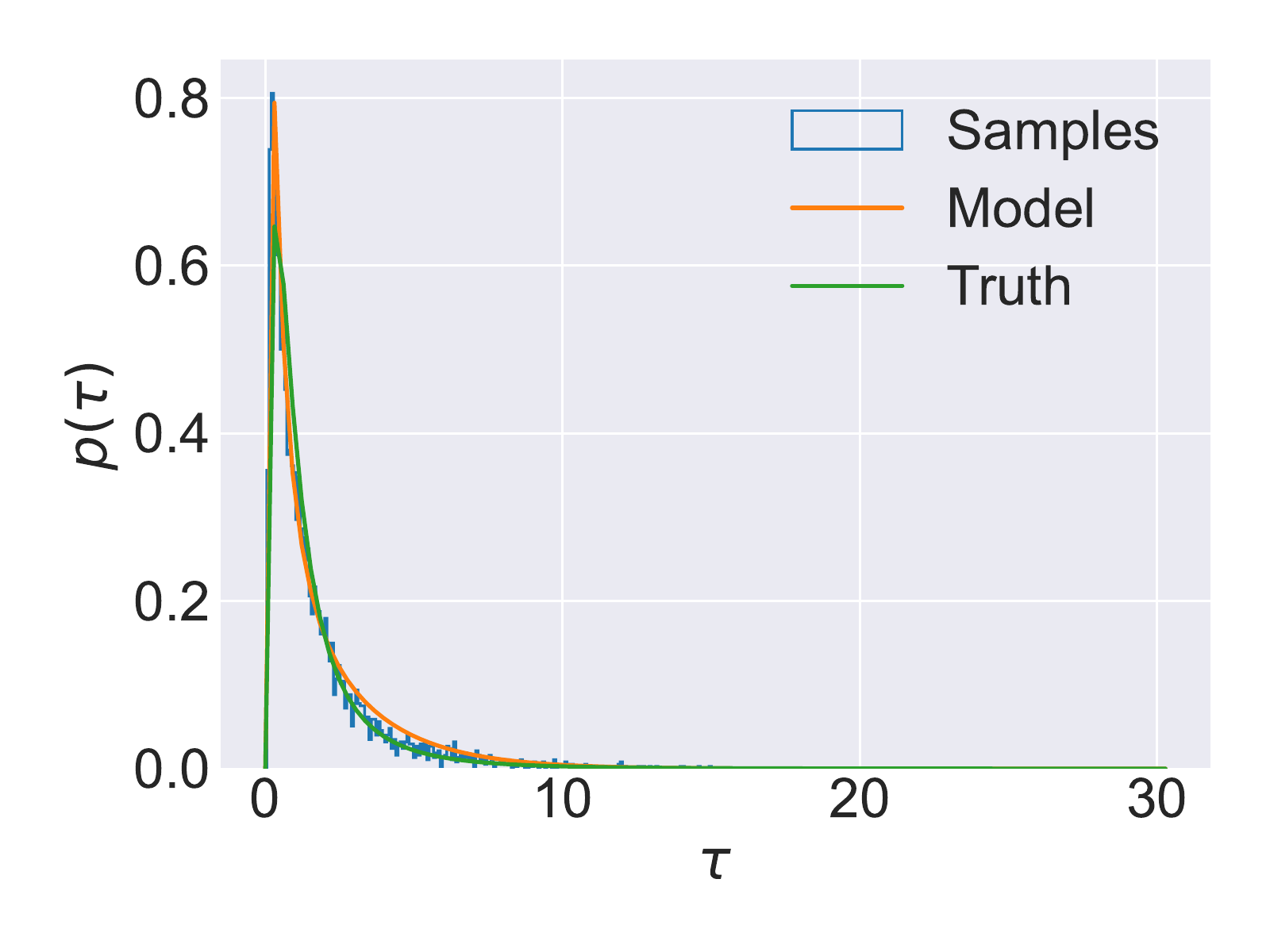}
    \includegraphics[width=0.23\textwidth]{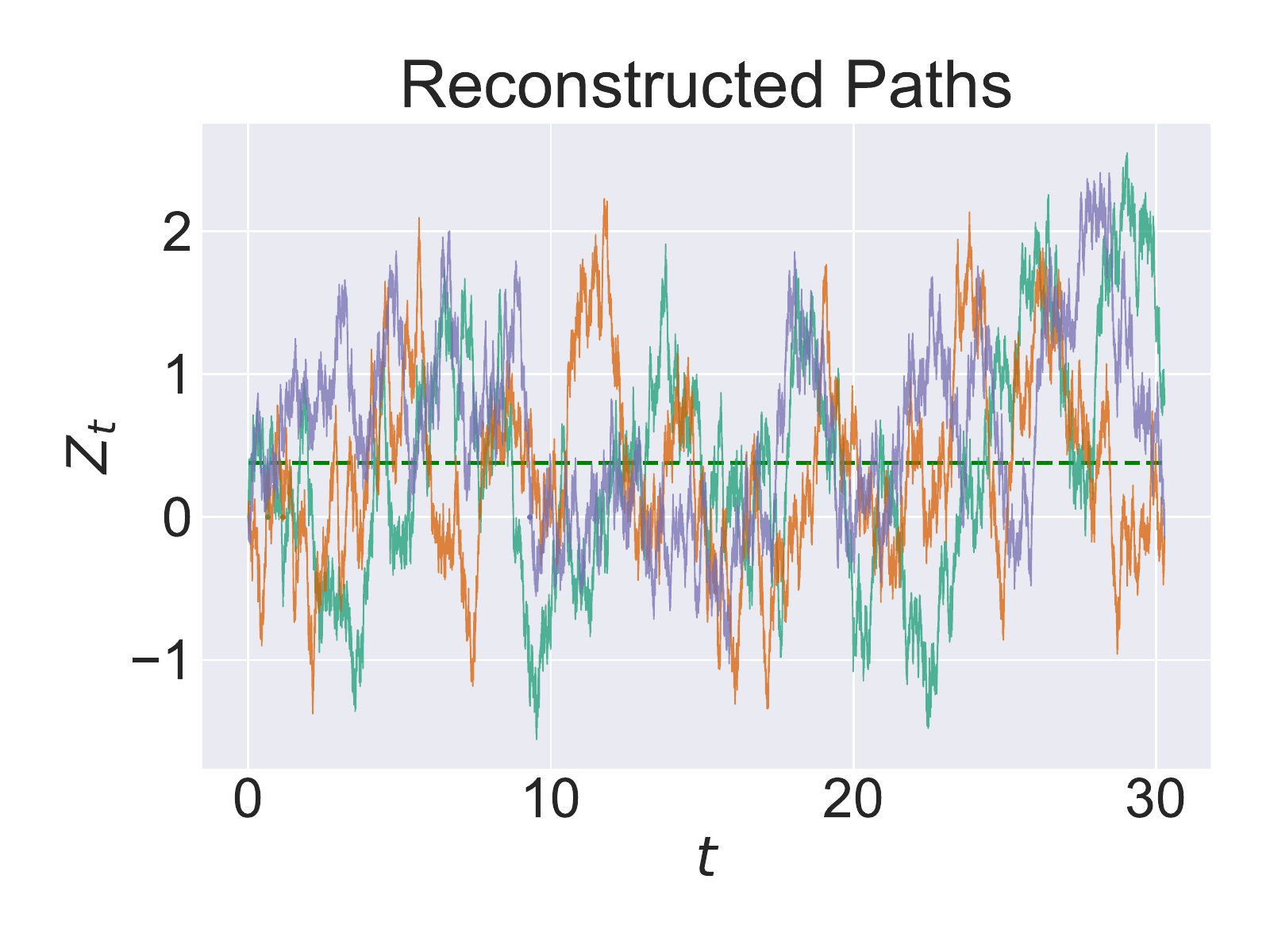}
    \caption{Example of estimated log-normal renewal process with samples generated from learned diffusion. Left: histogram of samples compared with the true density and estimated density. Right: learned sample paths with excursion lengths corresponding to the histogram. The dashed line corresponds to $\delta$.}
    \label{fig:samples}
\end{figure}

Simulating realizations of the proposed point process follows from existing simulation techniques for diffusions stopped at a boundary. 
For example, when solving certain linear elliptic PDEs, the solution is based on computing the first hitting time of diffusion on the boundary of the domain~\citep{gobet2010stopped}.
We propose a method based on the Euler-Maruyama method where excursions are simulated by computing full sample paths and finding the times when an excursion occurs. 
Importantly, this allows for both sampling the full sample path based on the fitted drift and obtaining samples of interarrival times.
We summarize the heuristic for the sampling procedure in Algorithm~\ref{alg:sampling}.
Figure~\ref{fig:samples} shows an example of estimating a log-normal distribution.
The figure on the left shows the histogram of samples in blue, the model probabilities computed using~\eqref{eq:girsanov_one_dim} in orange, and the true density for a log-normal distribution in green.
On the right, samples of different trajectories whose excursions to the blue dotted line result in interarrival times distributed according to the left.

\begin{algorithm}[!ht]
	\caption{Sampling arrival times} 
	\label{alg:sampling}
	\begin{algorithmic}[1]
	\STATE \textbf{Input:} Parameters of drift $\mu(x,t)$, step size $\Delta t$, total time $T$, initial state $X_0$, minimum height $\delta$ and variance $\sigma$
	\STATE Sample using Euler-Maruyama a sample path from $X_0$ to $X_t$ using step size $\Delta t$ and variance $\sigma \Delta t$:
	$$X_{t+1} \sim \mathcal{N}(X_t + \mu(X_t, t)\Delta t, \sigma \Delta t)$$
	\STATE Compute the set $\tau_0 = \{X_s = 0 : s \in [0,T]\}$ 
	\STATE Filter $\tau_0 \to \tau_\delta$ where $\tau_\delta = \tau_0 \setminus \{t_i \in \tau_0  : \max_{t_i < s < t_{i+1}}X_s < \delta\}$.
	\STATE \textbf{Return.} Set of arrival times $\tau_\delta$.
	\end{algorithmic} 
\end{algorithm}
The simulation algorithm for a more complicated process follows a similar procedure, for example, the drift term $\mu(x,t)$ can be replaced by the history-dependent function $\mu(t, \mathcal{H}_t)$, which we discuss further in the Appendix.

\section{Practical Considerations}
Here we discuss some practical considerations regarding the model. 
We first describe how partitioning the space of excursions can lead to a marked point process without resorting to a $d$-dimensional latent process.
We then describe a result regarding the family of interarrival distributions the method can represent,
We finally describe the relationship between the FHT problem and the excursions approach.

\subsection{Multi-dimensional Point Process from a Single One-dimensional Diffusion}
A unique property of the proposed method is the ability to represent a multi-dimensional point process with a single one-dimensional latent diffusion process. 
The main idea comes from partitioning the measure on the space of excursions to correspond to different classes. 
In the simplest case, the arrival can come from an excursion above or below the reference level.
The structure that should be maintained is a natural ordering between the marks of the point process for discrete marks. 
For example, in the case of a two-dimensional process, one set of marks should always be greater than the other. 
This can then correspond to the running maximum and running minimum times. 
Note that this differs from the $d$-dimensional process that was described in Section~\ref{sec:multi} which assumes a $d$-dimensional noise source.
We describe potential applications in a financial setting in Appendix~\ref{sec:examples} where bids and asks in an opaque market are generated by the running maximum or running minimum process.

\subsection{Expressiveness}
A relevant question asks how expressive the class of interarrival times generated by excursions is.
We characterize this in the following remark:
\begin{remark}
Consider an excursion length distribution for a fixed $\delta$ as  $p_{\updownarrow,\delta}$ with support on $\mathbb{R}_+$ and a distribution that we wish to approximate using the excursion length of an Ito diffusion as $p_\star$.
Define the function space such that the excursion distribution is a density as a subset of Lipschitz functions $\mathrm{Exc} \subset \mathrm{Lip}(\mathbb{R}_+, \mathcal{D})$.
For a fixed integration time $T$, the excursions of the diffusion with drift $\mu$ can represent $p_\star$ if the 1-Wasserstein distance between the two is less than
\begin{align*}
 \sup_{\mu \in \mathrm{Exc}(\mathbb{R}_+, \mathcal{D})}\sqrt{\frac{T^2}{2} \mathbb{E}_{X_t \sim\mathbb{ Q}}\left[\int_0^T \mu \mathrm{d} X_t - \frac12 \int_0^T \mu^2 \mathrm{d} t \right]}.
\end{align*}
\end{remark}
\begin{proof}[Intuition of Proof]
This follows bounding the Wasserstein distance using Pinsker's inequality. 
The full proof is in Appendix~\ref{sec:proofs}.
\end{proof}
The remark allows for a simple condition on whether a distribution can be approximated by the proposed method for fixed integration time and excursion height. 
The 1-Wasserstein distance is easy to calculate since it is the difference between the CDFs, making the remark useful in practice since one can certify whether a distribution can be represented using the change of measure.
More generally,~\citet[Section 6]{pitman2007ito} discusses the applicability of functions of Brownian excursions representing the full class of stable L\'evy processes at the cost of relaxing many of the assumptions on the drift of the process.

\begin{figure}
    \centering
    \includegraphics[width=0.48\textwidth]{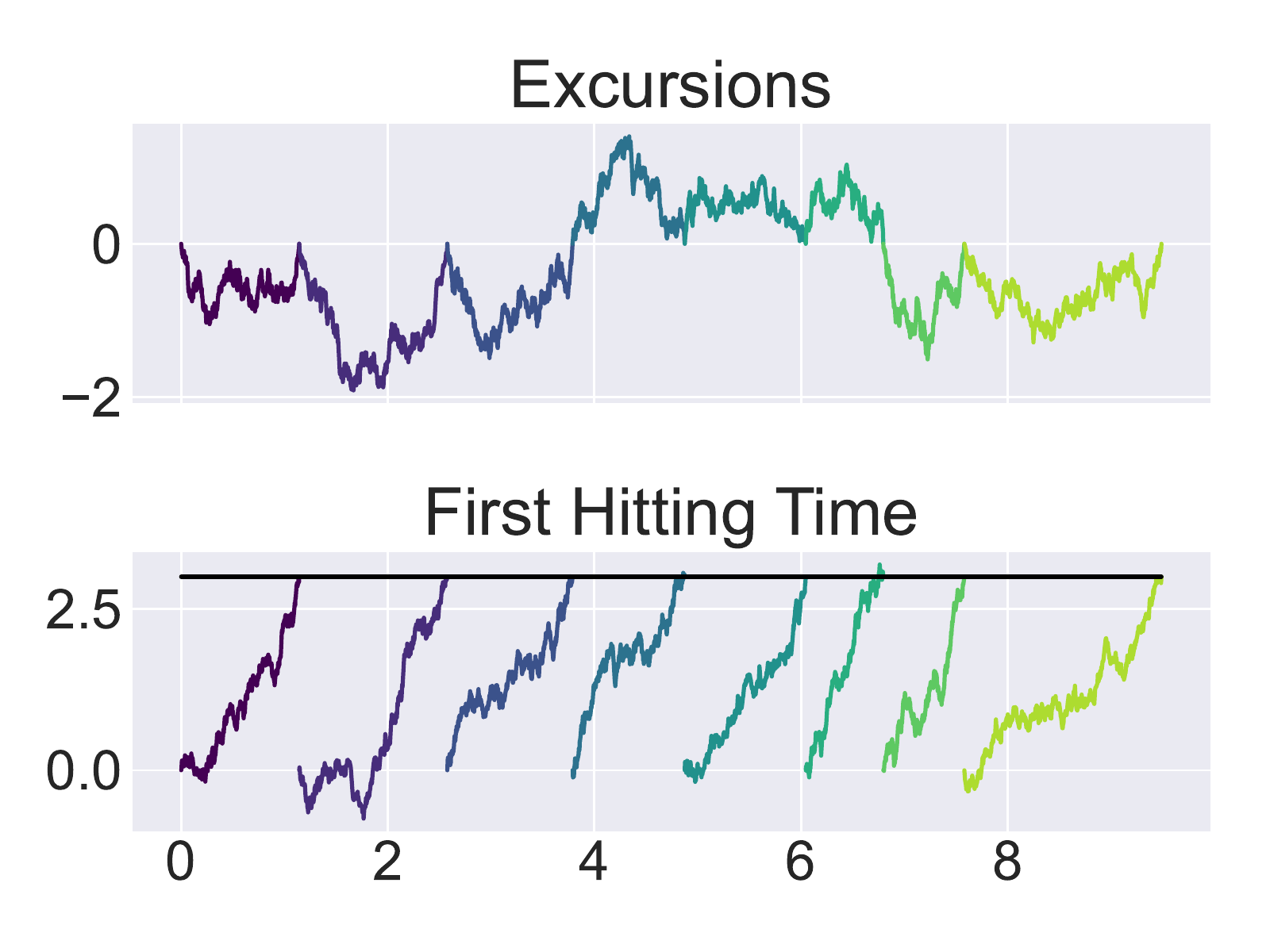}
    \caption{Schematic comparing excursions (top) with first hitting times (bottom). The excursions result in a continuous path whereas the first hitting time approach results in discontinuous paths that need to be reset after each arrival.}
    \label{fig:fht_vs_ex}
\end{figure}

\subsection{Comparing Excursions and First Hitting Times}
\label{sec:fht}

We now describe in greater detail the similarities and differences between studying excursions and FHTs related to the motivation of this work. 
Figure~\ref{fig:fht_vs_ex} provides a qualitative description of the difference between the excursion representation we consider here and the first hitting time approach. 

\begin{figure*}
    \centering
	\subfloat[10-$d$ $\mu = -\tanh(x)$;]{\includegraphics[width=0.24\textwidth]{./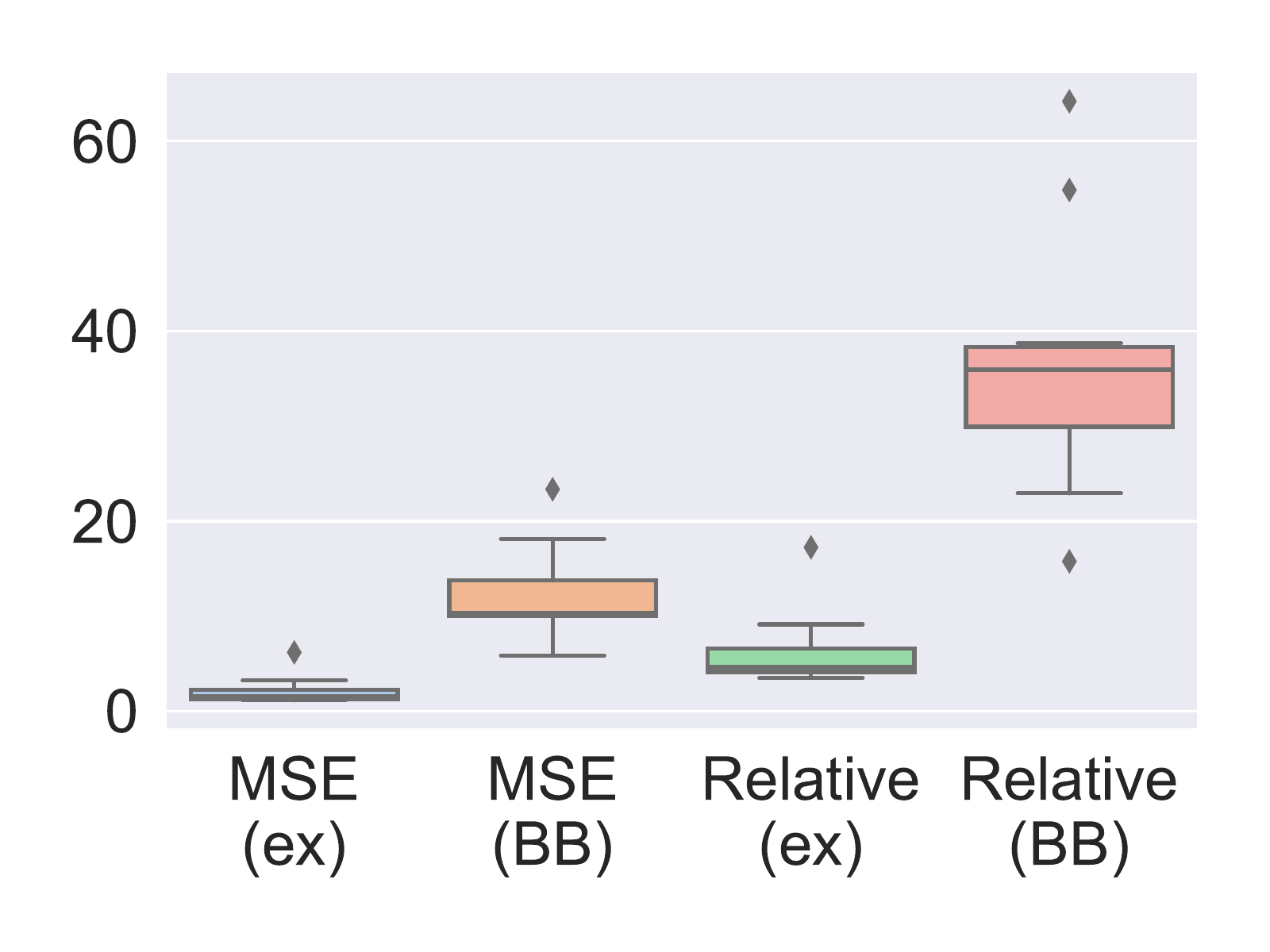}}
	\subfloat[10-$d$ $\mu = -x^3$;]{\includegraphics[width=0.24\textwidth]{./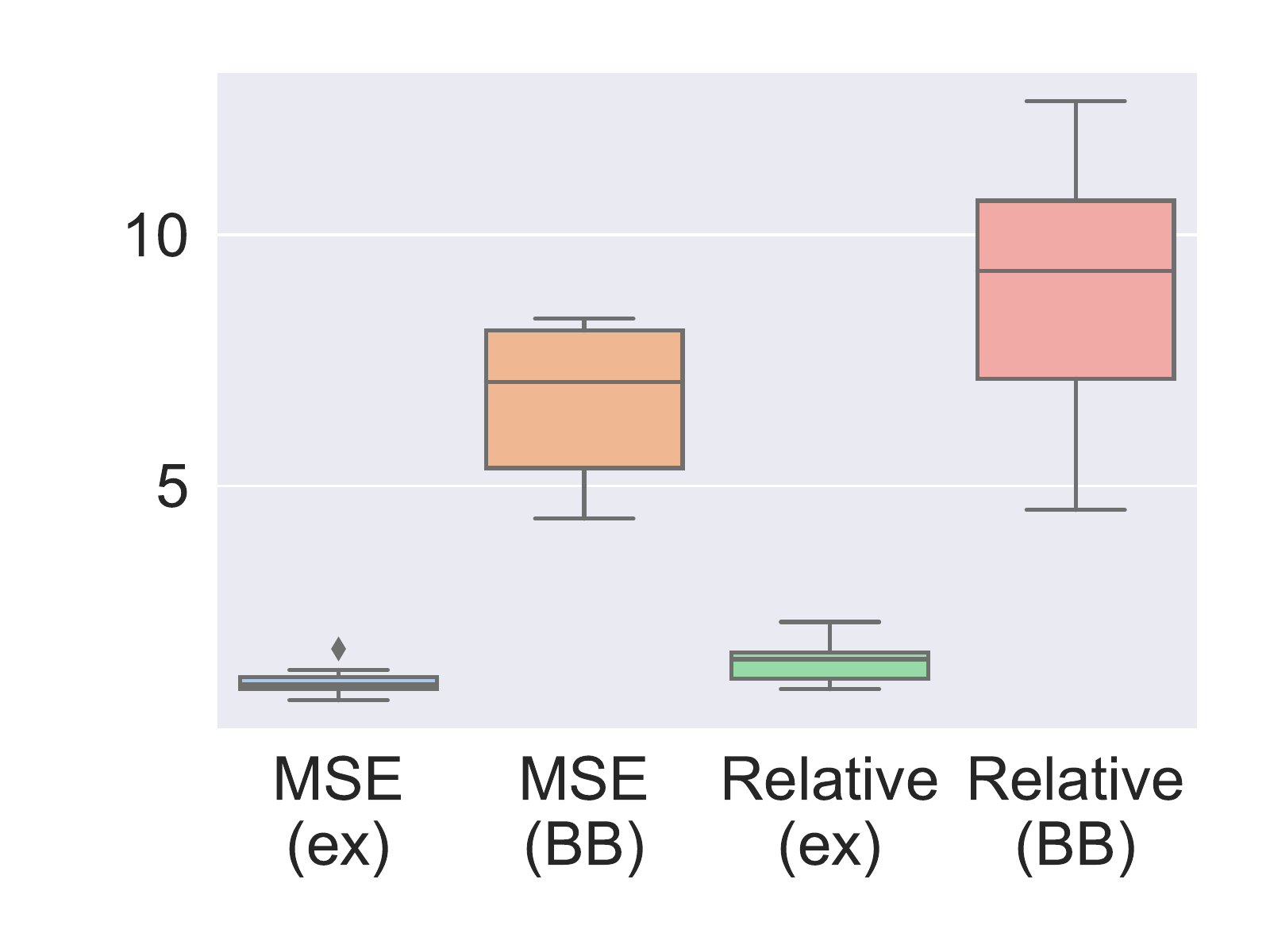}}
	\subfloat[2-$d$ $\mu = circle$;]{\includegraphics[width=0.24\textwidth]{./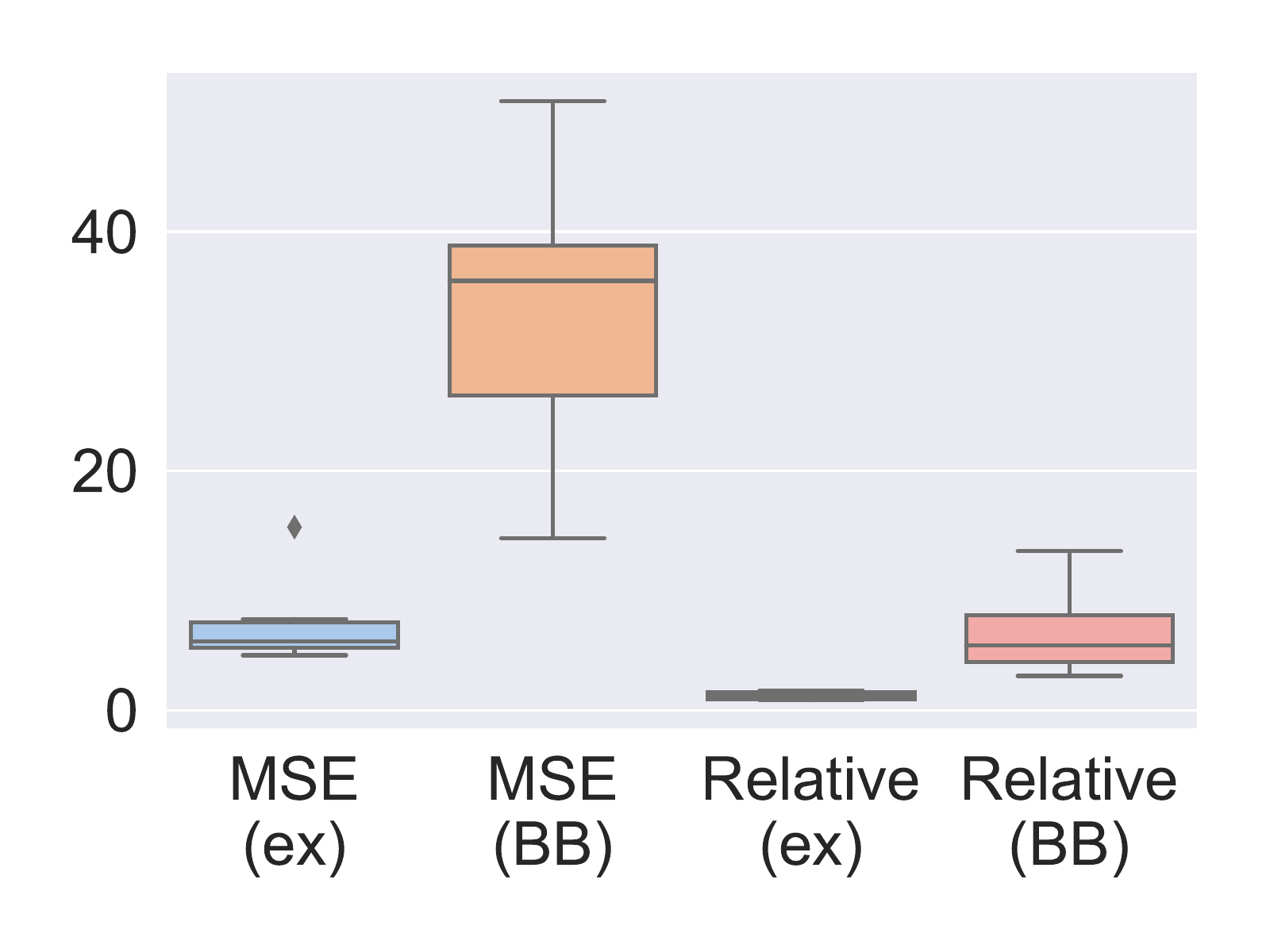}}
	\subfloat[5-$d$ $\mu = -x$;]{\includegraphics[width=0.24\textwidth]{./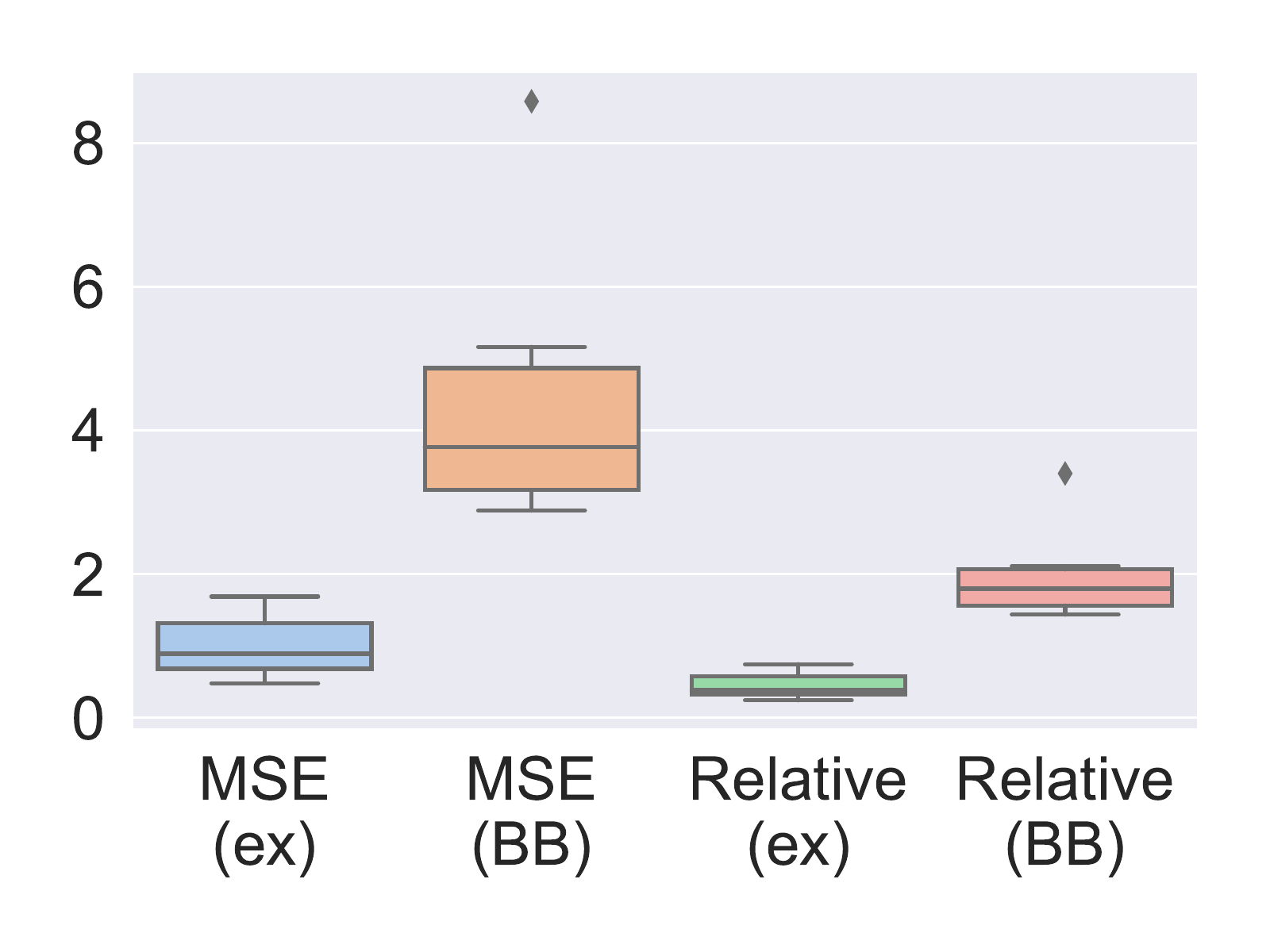}}
    \caption{Examples of point processes with interarrival distributions given by different $\mu$. Interarrival times are given as a crossing of $0$. Results from 10 runs. Relative refers to the MSE divided by the norm of the solution. BB refers to the Brownian bridge estimator and Ex refers to the proposed excursion estimator.}
    \label{fig:mse_nd}
\end{figure*}

\paragraph{Interpretations of Excursions and Hitting Times}
Excursions and hitting times share many similarities, since an excursion may be seen as the FHT to $\delta$ and back to $0$ again.
This relationship is specified in detail in Proposition~\ref{prop:laplace} found in Appendix~\ref{sec:laplace} where the Laplace transform of the first hitting time distribution is given.
In that sense, the primary reason to consider one representation versus the other is the interpretation of the underlying phenomena being observed.
The first hitting time density requires the assumption that the particle is returned to its original state at $t=0$ for every subsequent arrival.
This makes the full sample path discontinuous since the particle must hit a level $\delta \neq X_0$ and then instantly return to $X_0$.
On the other hand, considering an excursion from a level yields a continuous sample path for a full sequence of observations.
Existing literature has considered this relationship, as in~\citet[Proposition 3.4]{ananova2020excursion} where a sample path of diffusion can be reconstructed from excursions.

\paragraph{Densities Described by Laplace Transforms}

The problem of FHT and excursions are closely related, since both concern properties of diffusion as they approach different regions of the state space.
Both have been historically studied through their Laplace transforms.
For a univariate autonomous SDE, there exists a Sturm-Liouville problem associated with the Laplace transform of the excursion length distribution~\citep{yen2013local}.
The FHT density has also been studied through the same mathematical formulation.
However, working with the Laplace transform representation is difficult as it is necessary to invert the Laplace transform to obtain the density.
Inverting the Laplace transform is numerically unstable and also prone to numerical errors. We present a detailed description of the approach in Appendix~\ref{sec:laplace}.

\paragraph{Connection to the Running-Maximum and the Drawdown Processes}
Finally, one important property of excursions is the relationship between the running maximum of a Brownian motion and the zeros of a reflected Brownian motion.
Specifically, the identity
$$
\sup_{s < t} W_s - W_t \overset{d}{=} |W_t|
$$
where $W_t$ is Brownian motion. 
This allows the interpretation where excursions are related to times the process reaches its running maximum. 
For example, we can consider excursions above certain functionals of the process to provide additional physical interpretations of the excursion process as we illustrate in the next section.
This interpretation is not possible when considering only FHT where the diffusion must be reset at each arrival. 

\section{Experiments}
\label{sec:experiments}

\begin{figure*}
    \centering
    \subfloat[Exponential]{\includegraphics[width=0.23\textwidth]{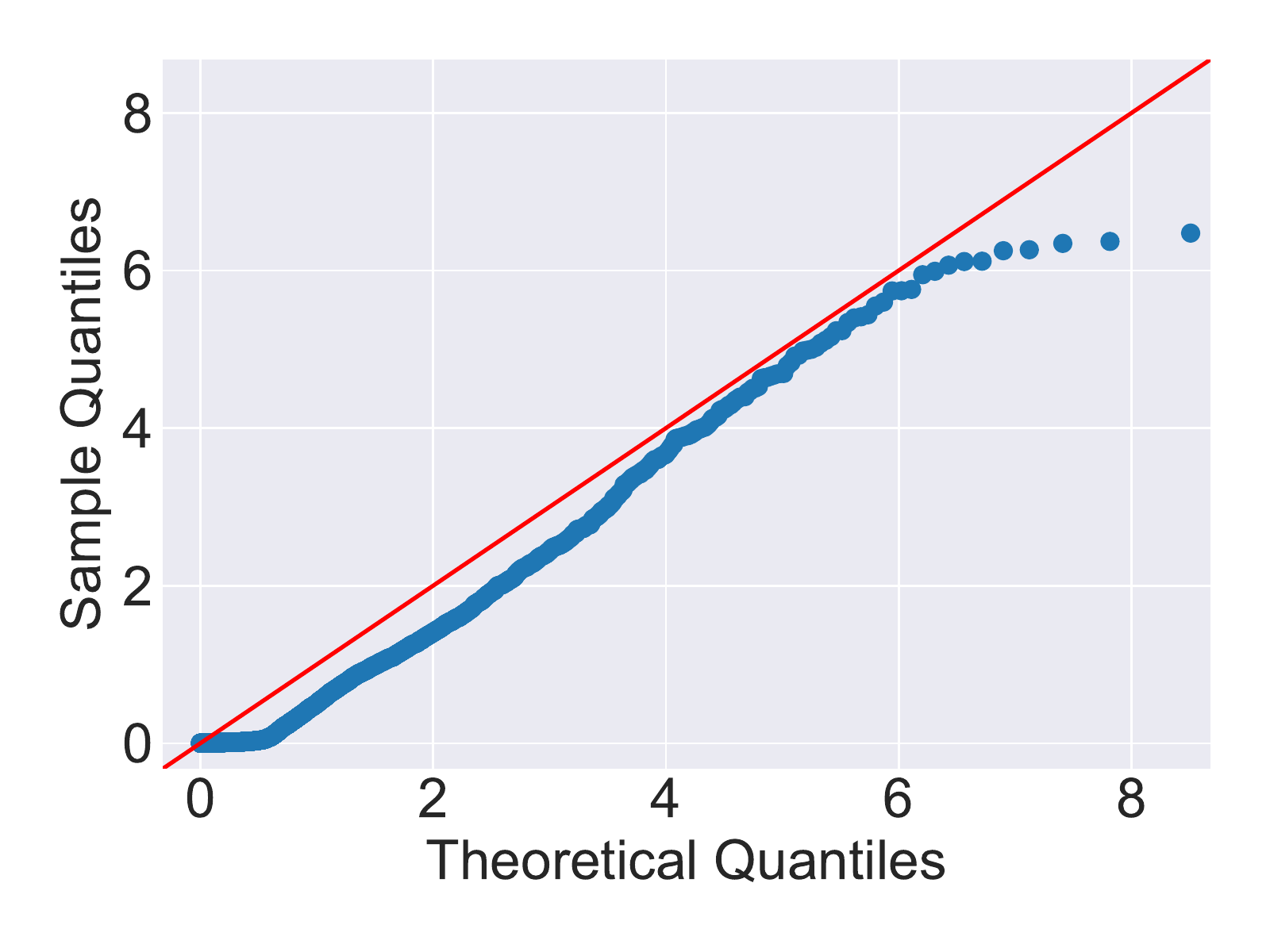}}
    \subfloat[Weibull]{\includegraphics[width=0.23\textwidth]{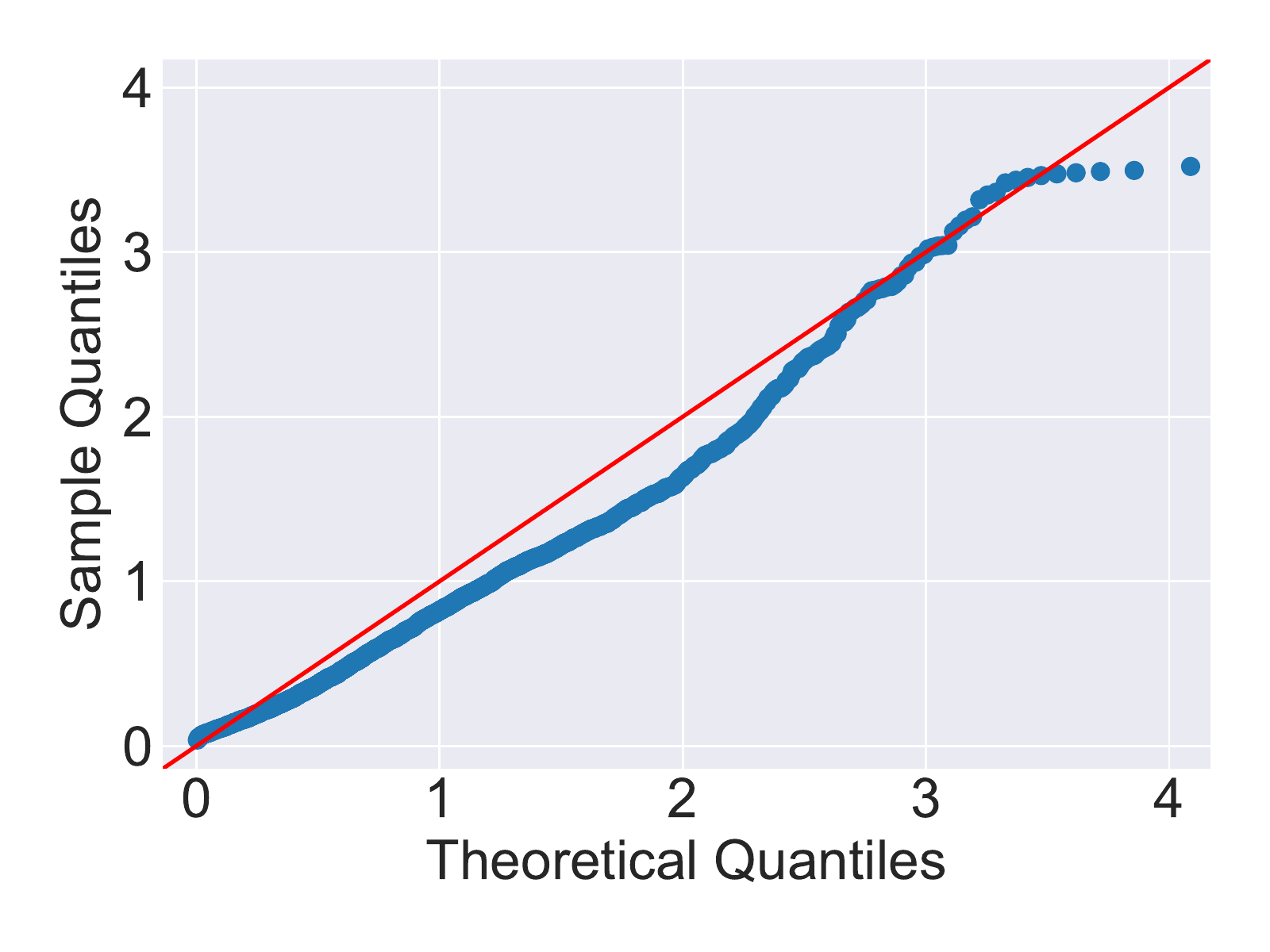}}
    \subfloat[Gamma]{\includegraphics[width=0.23\textwidth]{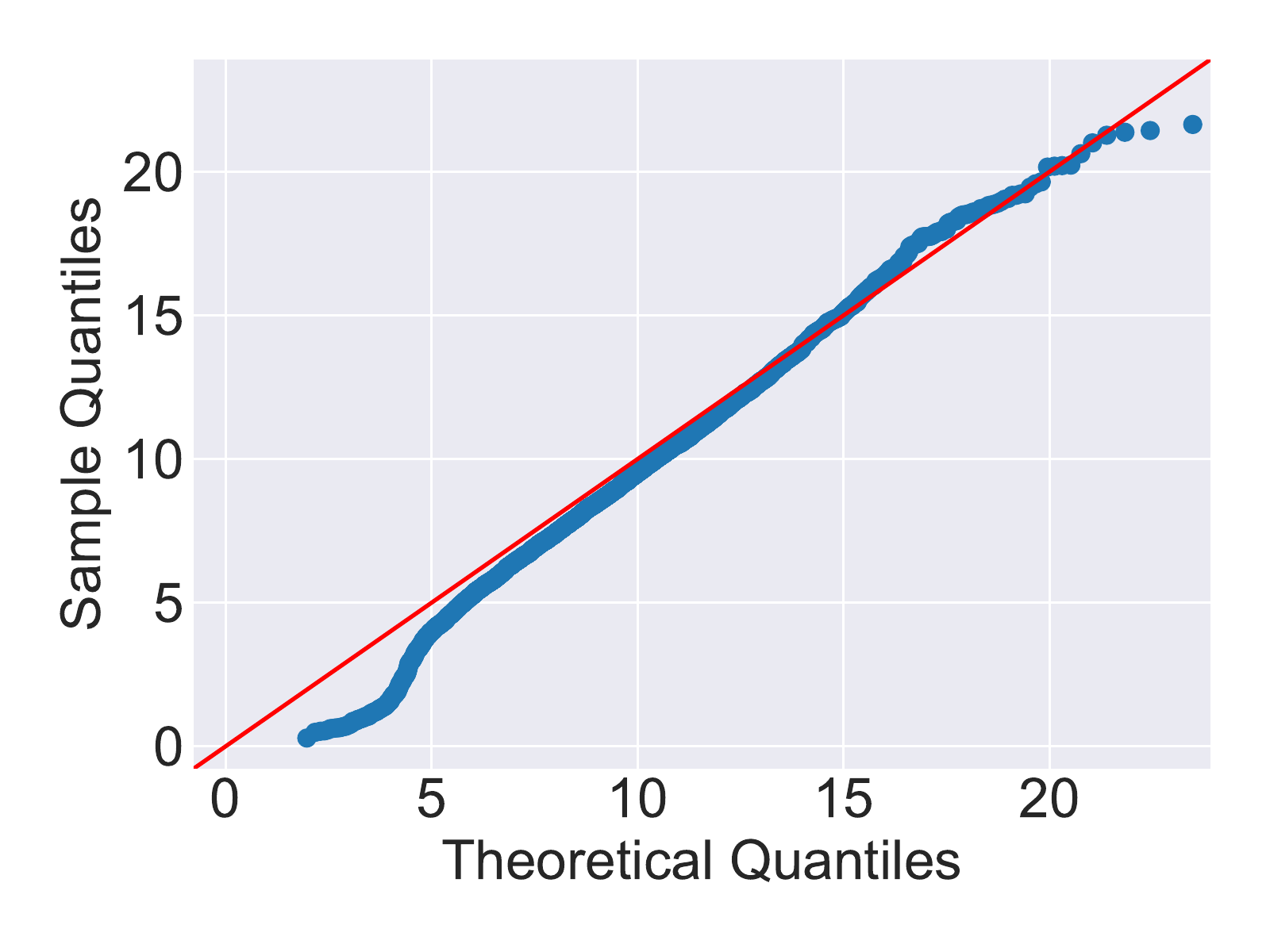}}
    \subfloat[Log-Normal]{\includegraphics[width=0.23\textwidth]{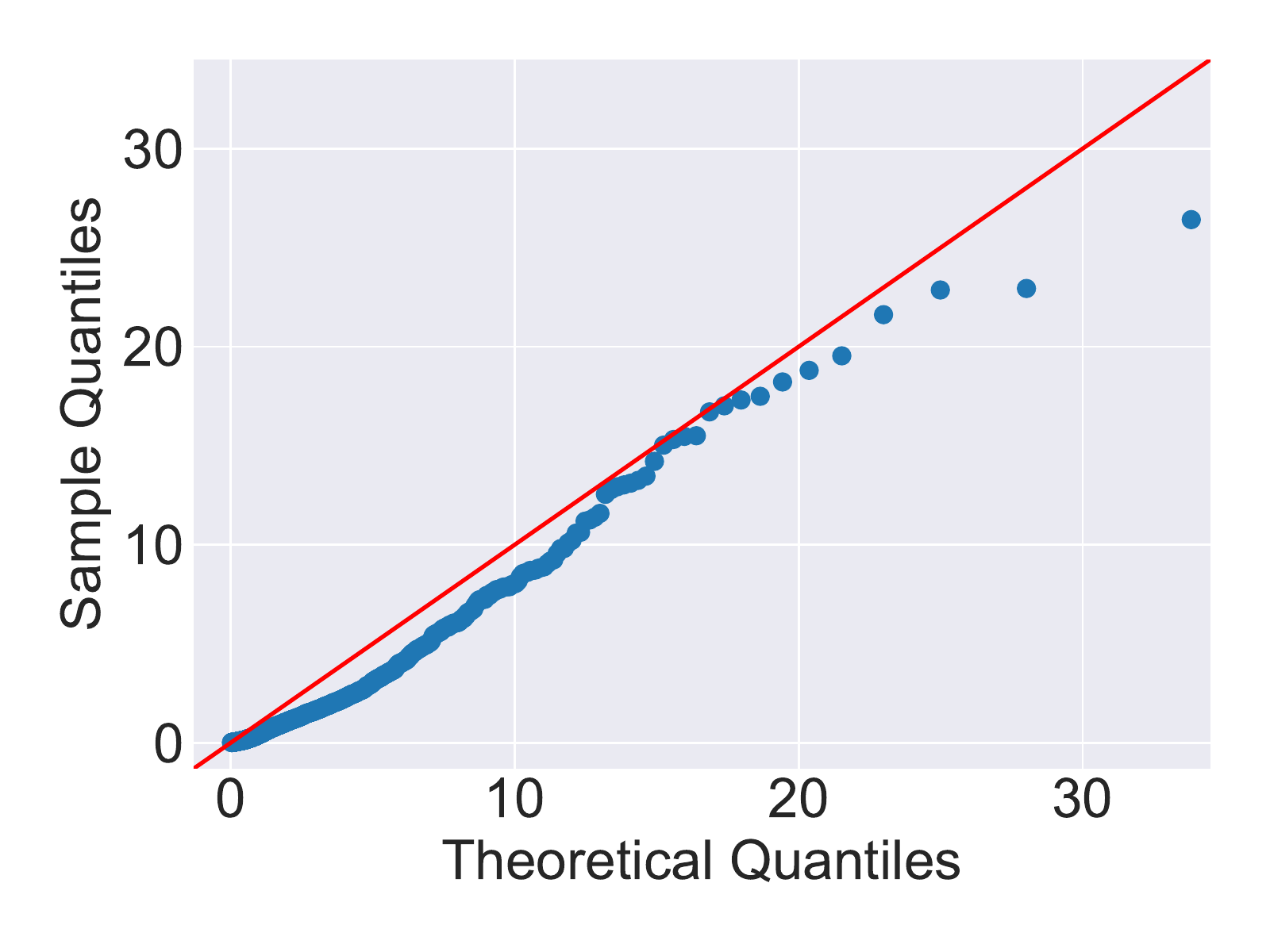}}
    \caption{Comparison of QQ plots for Poisson process, Weibull, Gamma, and log-normal renewal processes. All are fit using MLE with 200 samples from the specified renewal process.}
    \label{fig:qq}
\end{figure*}
We now consider the modeling framework in a number of synthetic and real data experiments.
The first set of experiments is based on observing the zero times of different diffusions.
These experiments examine how well the true drift can be recovered using the proposed estimator.
The baseline for these experiments is a standard SDE regression algorithm that does not consider Brownian excursions but instead considers Brownian bridges. 
The second set of experiments analyzes the proposed estimator in representing the interarrival distributions of canonical renewal processes.
Finally, we provide a real example regarding a  physical process where the underlying behavior is posited to be related to a continuous process. 
In this case, we consider how well the learned diffusion agrees with the latent factor that's known to cause the point process.

\subsection{Recovery of Drift from Excursion Lengths}

In order to validate the method in the context of a latent diffusion, we consider a series of experiments on how well the method can recover the drift of the latent diffusion.
This would correspond to a real scenario where the data are generated according to excursions of a diffusion. 
We observe $\{(\tau_i, m_i\}_{i=0}^N$ where $m \in \mathbb{N}$ is the mark corresponding to the dimension where the excursion occurs. 
We compute this for different choices of $\mu$ and compute the mean squared error (MSE) between the estimated $\hat{\mu}$ and the true $\mu$.
The observation are generated by first simulating a diffusion and finding the zero times of the sample paths.
We detail the different models in Table~\ref{tab:mult_d_exp} in the appendix.
The results are illustrated in Figure~\ref{fig:mse_nd} where we compare the estimation based on maximizing the likelihood of diffusion with unknown drift based on a path integral estimator. 
The difference between the estimators is that the proposed one uses the expectation over excursions whereas the other considers Brownian bridges.
The Brownian bridges do not use the complete information of the problem, and therefore result in higher errors as well as higher variances than the Brownian excursions.

\begin{figure*}[h!]
    \centering
\includegraphics[trim=0pt 50pt 0pt 0pt, width=\textwidth]{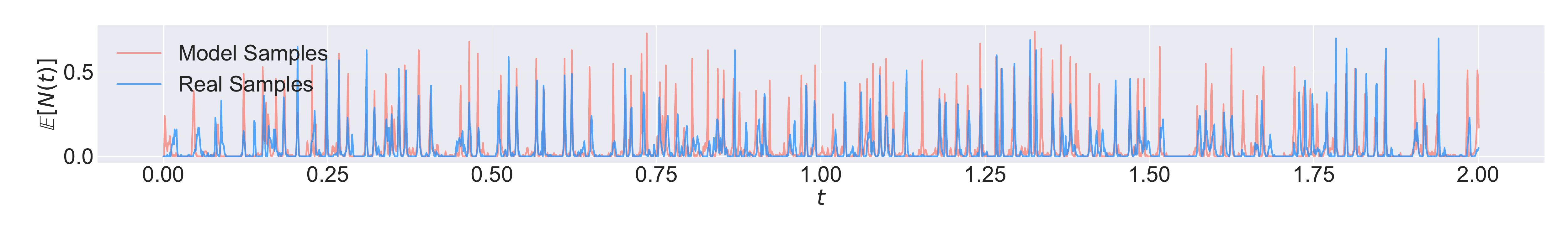}
    \caption{Comparison of the empirical histogram of arrival times versus the sampled histogram of arrival times. Sampled spikes generally align with true spikes.}
    \label{fig:hist_neuron}
\end{figure*}

\paragraph{History and Exogenous Signal Dependent Processes}
We consider an extension of the previous experiment in the 1-dimensional case where we recover the coefficient of the drift when it is dependent on either the history of hitting times or on an observed exogenous signal. 
Specifically, we define a drift of the following form:
$
\mu(X_t, \mathcal{S}_t) = -X_t + w\varphi(t - \mathcal{S}_t)
$
where $w$ is the coefficient of interest and $\varphi\left ( \cdot \right)$ is a known kernel that influences the drift based on the observed history or exogenous process. 
We choose $\varphi$ to be the exponential kernel defined as $\varphi = \exp\left(-\left(t - \mathcal{S}_t \right)/\eta\right)$ with $\eta$ a fixed parameter. 
In the case of history dependence, $\mathcal{S}_t$ is given as $\mathbb{H}_t$ while the exogenous process is given by a randomly generated signal generated using uniform increments, full details are given in Appendix~\ref{sec:experiment_details}.
We compare the squared error of the estimated value of $\hat{w}$ to the true value of $w$ as recovered by the same Brownian bridge estimator and by the proposed excursion estimator in Table~\ref{tab:hist_table}.
The results again suggest that the proposed estimator achieves better performance than the relevant baseline of regressing a SDE to the data.

\begin{table}[h]
\footnotesize
    \centering
    \begin{tabular}{@{}lllll@{}}\toprule
   &  & $w=0.5$ & $w=1$ & $w=2$ \\
   \midrule
        \multirow{2}*{History}  & BB & 0.247(0.003)  &  0.999(0.012)& 3.996(0.021) \\
          & Ex & 0.099(0.091) & 0.067(0.068) & 0.903(0.376) \\ 
         \multirow{2}*{Input} & BB & 0.248(0.003)  & 0.997(0.018) & 3.786(0.580)  \\
          & Ex & 0.148(0.090)& 0.069(0.113) & 1.301(2.012) \\ 
          \bottomrule
    \end{tabular}
    \caption{Squared error $(\hat{w} - w)^2$ of the history coefficient for the Brownian bridge estimator (BB) versus the Brownian excursion estimator (Ex). }
    \label{tab:hist_table}
\end{table}

\subsection{Estimating and Sampling Point Processes}

Next, we are interested in determining how well the proposed method can represent some canonical point processes. 
In this case, we consider the homogeneous Poisson process, a Gamma renewal process, a Weibull renewal process, and a log-normal renewal process with 40 samples with 5 realizations in each sample for a total of 200 points. 
Full parameters of the distributions are given in the appendix.
This experiment tests both inference (Algorithm~\ref{alg:mle}) and the sampling (Algorithm~\ref{alg:sampling}).
We plot a QQ plot of the samples generated by the excursion lengths versus the theoretical quantiles in Figure~\ref{fig:qq}.
The figures suggest that the estimation and sampling methods are able to capture the distributions of the point process.  

\subsection{Real Data}
Finally, we consider a neuroscience dataset where the firing of mouse neurons is recorded as a function of an external stimulus as described in~\citet{tripathy2013intermediate}.
Full details of the dataset are in Appendix~\ref{subsec:real_data}.
Our main goal for this experiment is to determine how well the model can fit this data  and more importantly determine whether we can use the estimated latent path $Z_t$ as a signal that relates the original stimulus to the observed neuron spike times. 
We illustrate these findings in Figures~\ref{fig:hist_neuron} and~\ref{fig:learned_stim} where we compare the histograms of the averaged point processes generated by the true data samples and the estimated excursion process (Figure~\ref{fig:hist_neuron}) and compare the learned stimulus to the true stimulus (Figure~\ref{fig:learned_stim}).
The learned stimulus was obtained by transforming the sampled path $Z_t$ to $\tilde{Z_t} = a\log(\mathbb{E}[Z_t]) + b$ where $a,b$ are computed according to least squares with respect to the true stimulus.
We use the $\log$ transformation since the peaks of the stimulus correspond to new arrivals whereas the zeros of the learned process correspond to new arrivals -- applying the $\log$ then transforms the zeros to peaks.
In both cases, the alignment of the spikes between the learned and the true signal is well maintained.
\begin{figure}
    \centering
\includegraphics[trim=10pt 0pt 10pt 0pt, clip, width=0.5\textwidth]{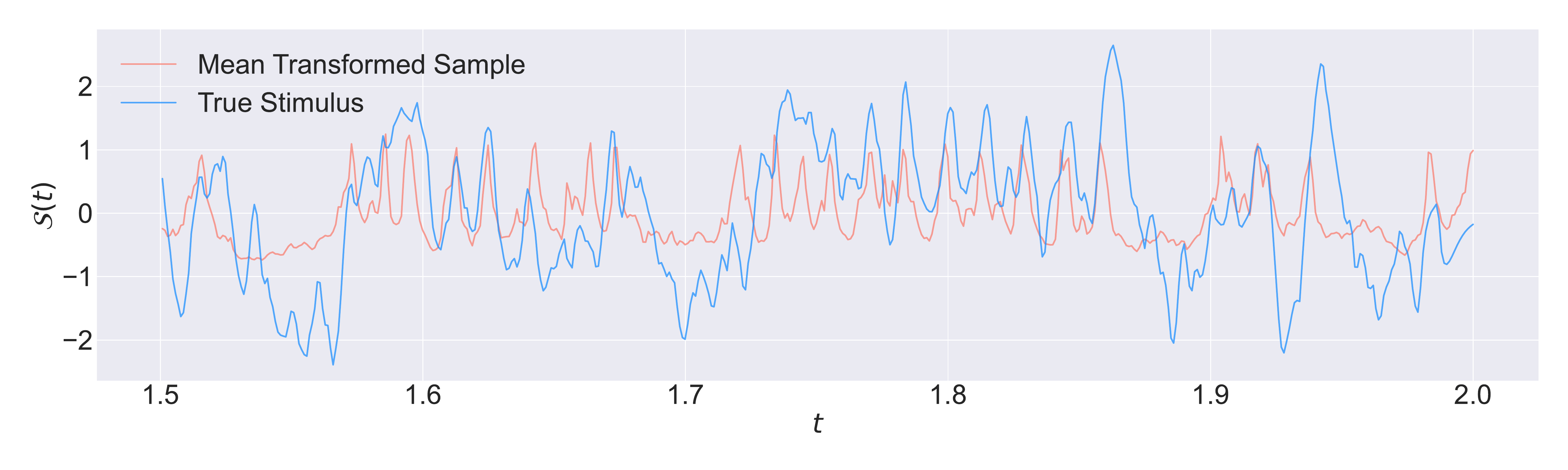}
    \caption{Average of learned sample on $t\in [1.5,2]$ paths compared with the true stimulus. Spikes of the transformed learned stimulus generally align with the true stimulus.}
    \label{fig:learned_stim}
\end{figure}

\section{Discussion}

We proposed a framework that allows for interpreting the arrivals of a point process in terms of a latent diffusion.
We described extensions to cases where the point process is multi-dimensional and depends on history or an exogenous signal.
The numerical results suggest that the estimator and the framework is useful for modeling a variety of point processes and outperforms standard SDE regression techniques.
Additionally, the results on neural data demonstrate the applicability of the proposed framework in scenarios where recovery of the unobserved continuous latent process is beneficial for analyzing a particular temporal point process. 

\paragraph{Limitations} 
The framework has a number of limitations as well. 
While we empirically validated the ability to recover the correct drift, we have no identifiability proof that guarantees the true drift will be recovered. 
In cases where the $\mu$, $\delta$, and $\sigma$ are all parameters, for example, identifiability does not hold, and restricting to a smaller class of parameters is necessary. 
Proving consistency of the estimator could be considered for follow-up work.
On a practical front, there are many situations where having an interpretation in terms of a continuous process is not appropriate. 
In these cases, more traditional point process models, such as those that rely on intensities, should be considered for the modeling task. 

\section*{Acknowledgement}
We appreciate the great support from professor Nathan Urban by sharing the valuable neuroscience dataset.
We thank Zaeem Burq and Kashif Rasul for invaluable discussions regarding the theory of excursions. 
We additionally thank Jessica Loo, Wei Deng, Volodymyr Volchenko, and Yuriy Nevmyvaka for helpful feedback on the manuscript.
AH and VT were supported in part by the Air Force Office of Scientific Research under award number FA9550-20-1-0397.
AH was also partially supported by NSF GRFP.

\bibliography{refs}

\begin{thebibliography}{39}
\providecommand{\natexlab}[1]{#1}
\providecommand{\url}[1]{\texttt{#1}}
\expandafter\ifx\csname urlstyle\endcsname\relax
  \providecommand{\doi}[1]{doi: #1}\else
  \providecommand{\doi}{doi: \begingroup \urlstyle{rm}\Url}\fi

\bibitem[A{\"\i}t-Sahalia(2002)]{ait2002maximum}
Yacine A{\"\i}t-Sahalia.
\newblock Maximum likelihood estimation of discretely sampled diffusions: a
  closed-form approximation approach.
\newblock \emph{Econometrica}, 70\penalty0 (1):\penalty0 223--262, 2002.

\bibitem[A{\"\i}t-Sahalia(2008)]{ait2008closed}
Yacine A{\"\i}t-Sahalia.
\newblock Closed-form likelihood expansions for multivariate diffusions.
\newblock \emph{The Annals of Statistics}, 36\penalty0 (2):\penalty0 906--937,
  2008.

\bibitem[Ananova et~al.(2020)Ananova, Cont, and Xu]{ananova2020excursion}
Anna Ananova, Rama Cont, and Renyuan Xu.
\newblock Excursion risk.
\newblock \emph{arXiv preprint arXiv:2011.02870}, 2020.

\bibitem[Anulova(1981)]{anulova1981markov}
SV~Anulova.
\newblock On markov stopping times with a given distribution for a wiener
  process.
\newblock \emph{Theory of Probability \& Its Applications}, 25\penalty0
  (2):\penalty0 362--366, 1981.

\bibitem[Babus and Kondor(2018)]{babus2018trading}
Ana Babus and P{\'e}ter Kondor.
\newblock Trading and information diffusion in over-the-counter markets.
\newblock \emph{Econometrica}, 86\penalty0 (5):\penalty0 1727--1769, 2018.

\bibitem[Barbieri et~al.(2005)Barbieri, Matten, Alabi, and
  Brown]{barbieri2005point}
Riccardo Barbieri, Eric~C Matten, AbdulRasheed~A Alabi, and Emery~N Brown.
\newblock A point-process model of human heartbeat intervals: new definitions
  of heart rate and heart rate variability.
\newblock \emph{American Journal of Physiology-Heart and Circulatory
  Physiology}, 288\penalty0 (1):\penalty0 H424--H435, 2005.

\bibitem[Bj{\"o}rk et~al.(1997)Bj{\"o}rk, Kabanov, and
  Runggaldier]{bjork1997bond}
Tomas Bj{\"o}rk, Yuri Kabanov, and Wolfgang Runggaldier.
\newblock Bond market structure in the presence of marked point processes.
\newblock \emph{Mathematical Finance}, 7\penalty0 (2):\penalty0 211--239, 1997.

\bibitem[Borodin and Salminen(1997)]{borodin1997handbook}
Andrei~N Borodin and Paavo Salminen.
\newblock Handbook of brownian motion-facts and formulae.
\newblock \emph{Journal of the Royal Statistical Society-Series A Statistics in
  Society}, 160\penalty0 (3):\penalty0 596, 1997.

\bibitem[{\c{C}}etin(2012)]{ccetin2012filtered}
Umut {\c{C}}etin.
\newblock Filtered az{\'e}ma martingales.
\newblock \emph{Electronic Communications in Probability}, 17:\penalty0 1--13,
  2012.

\bibitem[Daley and Vere-Jones(2003)]{daley2003introduction}
Daryl~J Daley and David Vere-Jones.
\newblock \emph{An introduction to the theory of point processes: volume I:
  elementary theory and methods}.
\newblock Springer, 2003.

\bibitem[Ghusinga et~al.(2017)Ghusinga, Dennehy, and Singh]{ghusinga2017first}
Khem~Raj Ghusinga, John~J Dennehy, and Abhyudai Singh.
\newblock First-passage time approach to controlling noise in the timing of
  intracellular events.
\newblock \emph{Proceedings of the National Academy of Sciences}, 114\penalty0
  (4):\penalty0 693--698, 2017.

\bibitem[Gibbs and Su(2002)]{gibbs2002choosing}
Alison~L Gibbs and Francis~Edward Su.
\newblock On choosing and bounding probability metrics.
\newblock \emph{International statistical review}, 70\penalty0 (3):\penalty0
  419--435, 2002.

\bibitem[Gobet and Menozzi(2010)]{gobet2010stopped}
Emmanuel Gobet and St{\'e}phane Menozzi.
\newblock Stopped diffusion processes: boundary corrections and overshoot.
\newblock \emph{Stochastic Processes and Their Applications}, 120\penalty0
  (2):\penalty0 130--162, 2010.

\bibitem[Ichiba and Kardaras(2011)]{ichiba2011efficient}
Tomoyuki Ichiba and Constantinos Kardaras.
\newblock Efficient estimation of one-dimensional diffusion first passage time
  densities via monte carlo simulation.
\newblock \emph{Journal of applied probability}, 48\penalty0 (3):\penalty0
  699--712, 2011.

\bibitem[It{\^o}(1972)]{ito1972poisson}
Kiyosi It{\^o}.
\newblock Poisson point processes attached to markov processes.
\newblock In \emph{Proceedings of the Sixth Berkeley Symposium on Mathematical
  Statistics and Probability (Univ. California, Berkeley, Calif., 1970/1971)},
  volume~3, pages 225--239, 1972.

\bibitem[It{\^o}(2020)]{ito2020poisson}
Kiyosi It{\^o}.
\newblock Poisson point processes attached to markov processes.
\newblock In \emph{Contributions to Probability Theory}, pages 225--240.
  University of California Press, 2020.

\bibitem[Jaimungal et~al.(2014)Jaimungal, Kreinin, and
  Valov]{jaimungal2014generalized}
Sebastian Jaimungal, Alexander Kreinin, and Angel Valov.
\newblock The generalized shiryaev problem and skorokhod embedding.
\newblock \emph{Theory of Probability \& Its Applications}, 58\penalty0
  (3):\penalty0 493--502, 2014.

\bibitem[Jaiswal et~al.(2020)Jaiswal, Honnappa, and
  Rao]{jaiswal2020variational}
Prateek Jaiswal, Harsha Honnappa, and Vinayak Rao.
\newblock Variational inference for diffusion modulated cox processes.
\newblock 2020.

\bibitem[Lammers et~al.(2020)Lammers, Kim, Zhao, and Garcia]{lammers2020matter}
Nicholas~C Lammers, Yang~Joon Kim, Jiaxi Zhao, and Hernan~G Garcia.
\newblock A matter of time: Using dynamics and theory to uncover mechanisms of
  transcriptional bursting.
\newblock \emph{Current opinion in cell biology}, 67:\penalty0 147--157, 2020.

\bibitem[Lloyd et~al.(2015)Lloyd, Gunter, Osborne, and
  Roberts]{lloyd2015variational}
Chris Lloyd, Tom Gunter, Michael Osborne, and Stephen Roberts.
\newblock Variational inference for gaussian process modulated poisson
  processes.
\newblock In \emph{International Conference on Machine Learning}, pages
  1814--1822. PMLR, 2015.

\bibitem[Maystre and Russo(2022)]{maystretemporally}
Lucas Maystre and Daniel Russo.
\newblock Temporally-consistent survival analysis.
\newblock In \emph{Advances in Neural Information Processing Systems}, 2022.

\bibitem[{\O}ksendal(2003)]{oksendal2003stochastic}
Bernt {\O}ksendal.
\newblock Stochastic differential equations.
\newblock In \emph{Stochastic differential equations}, pages 65--84. Springer,
  2003.

\bibitem[Pitman and Yor(2007)]{pitman2007ito}
J~Pitman and M~Yor.
\newblock It{\^o}'s excursion theory and its applications.
\newblock \emph{Japanese Journal of mathematics}, 2\penalty0 (1):\penalty0 83,
  2007.

\bibitem[Pitman and Yor(1999)]{pitman1999laplace}
Jim Pitman and Marc Yor.
\newblock Laplace transforms related to excursions of a one-dimensional
  diffusion.
\newblock \emph{Bernoulli}, pages 249--255, 1999.

\bibitem[Potiron(2021)]{potiron2021existence}
Yoann Potiron.
\newblock Existence in the inverse shiryaev problem.
\newblock \emph{arXiv preprint arXiv:2106.11573}, 2021.

\bibitem[Rizoiu et~al.(2017)Rizoiu, Lee, Mishra, and Xie]{rizoiu2017hawkes}
Marian-Andrei Rizoiu, Young Lee, Swapnil Mishra, and Lexing Xie.
\newblock Hawkes processes for events in social media.
\newblock In \emph{Frontiers of multimedia research}, pages 191--218. 2017.

\bibitem[Roberts and Sangalli(2010)]{roberts2010latent}
Gareth~O Roberts and Laura~M Sangalli.
\newblock Latent diffusion models for survival analysis.
\newblock \emph{Bernoulli}, 16\penalty0 (2):\penalty0 435--458, 2010.

\bibitem[Sacerdote and Giraudo(2013)]{sacerdote2013stochastic}
Laura Sacerdote and Maria~Teresa Giraudo.
\newblock Stochastic integrate and fire models: a review on mathematical
  methods and their applications.
\newblock \emph{Stochastic biomathematical models}, pages 99--148, 2013.

\bibitem[Sacerdote and Zucca(2003)]{sacerdote2003threshold}
Laura Sacerdote and Cristina Zucca.
\newblock Threshold shape corresponding to a gamma firing distribution in an
  ornstein-uhlenbeck neuronal model.
\newblock \emph{Scientiae Mathematicae Japonicae}, 58\penalty0 (2):\penalty0
  295--306, 2003.

\bibitem[Shchur et~al.(2019)Shchur, Bilo{\v{s}}, and
  G{\"u}nnemann]{shchur2019intensity}
Oleksandr Shchur, Marin Bilo{\v{s}}, and Stephan G{\"u}nnemann.
\newblock Intensity-free learning of temporal point processes.
\newblock \emph{arXiv preprint arXiv:1909.12127}, 2019.

\bibitem[Stoyan and Penttinen(2000)]{stoyan2000recent}
Dietrich Stoyan and Antti Penttinen.
\newblock Recent applications of point process methods in forestry statistics.
\newblock \emph{Statistical science}, pages 61--78, 2000.

\bibitem[Subramanian et~al.(2020)Subramanian, Barbieri, and
  Brown]{subramanian2020point}
Sandya Subramanian, Riccardo Barbieri, and Emery~N Brown.
\newblock Point process temporal structure characterizes electrodermal
  activity.
\newblock \emph{Proceedings of the National Academy of Sciences}, 117\penalty0
  (42):\penalty0 26422--26428, 2020.

\bibitem[Tancik et~al.(2020)Tancik, Srinivasan, Mildenhall, Fridovich-Keil,
  Raghavan, Singhal, Ramamoorthi, Barron, and Ng]{tancik2020fourier}
Matthew Tancik, Pratul Srinivasan, Ben Mildenhall, Sara Fridovich-Keil, Nithin
  Raghavan, Utkarsh Singhal, Ravi Ramamoorthi, Jonathan Barron, and Ren Ng.
\newblock Fourier features let networks learn high frequency functions in low
  dimensional domains.
\newblock \emph{Advances in Neural Information Processing Systems},
  33:\penalty0 7537--7547, 2020.

\bibitem[Tripathy et~al.(2013)Tripathy, Padmanabhan, Gerkin, and
  Urban]{tripathy2013intermediate}
Shreejoy~J Tripathy, Krishnan Padmanabhan, Richard~C Gerkin, and Nathaniel~N
  Urban.
\newblock Intermediate intrinsic diversity enhances neural population coding.
\newblock \emph{Proceedings of the National Academy of Sciences}, 110\penalty0
  (20):\penalty0 8248--8253, 2013.

\bibitem[Vervaat(1979)]{vervaat1979relation}
Wim Vervaat.
\newblock A relation between brownian bridge and brownian excursion.
\newblock \emph{The Annals of Probability}, pages 143--149, 1979.

\bibitem[Watanabe(1987)]{watanabe1987construction}
Shinzo Watanabe.
\newblock Construction of semimartingales from pieces by the method of
  excursion point processes.
\newblock In \emph{Annales de l'IHP Probabilit{\'e}s et statistiques},
  volume~23, pages 297--320, 1987.

\bibitem[Watanabe(2010)]{watanabe2010ito}
Shinzo Watanabe.
\newblock It{\^o}’s theory of excursion point processes and its developments.
\newblock \emph{Stochastic processes and their applications}, 120:\penalty0
  653--677, 2010.

\bibitem[Yen and Yor(2013)]{yen2013local}
Ju-Yi Yen and Marc Yor.
\newblock Local times and excursion theory for brownian motion.
\newblock \emph{Lecture Notes in Mathematics}, 2088:\penalty0 x+--135, 2013.

\bibitem[Zucca and Sacerdote(2009)]{zucca2009inverse}
Cristina Zucca and Laura Sacerdote.
\newblock On the inverse first-passage-time problem for a wiener process.
\newblock \emph{The Annals of Applied Probability}, pages 1319--1346, 2009.

\end{thebibliography}
\newpage
\appendix
\onecolumn
\title{Inference and Sampling of Point Processes from Diffusion Excursions \\(Supplementary Material)}
\maketitle
We provide some technical results that supplement the main results in the text, 
 including the calculation of the excursion length density, discussion about the connections between diffusion-based and intensity-based frameworks, and Lamperti transformation for formulating a constant variance diffusion process.

\section{Proofs}
\label{sec:proofs}

\subsection{Proof of Excursion Length Density via Change of Measure}
\label{sec:proof_com}
We restate the main proposition here:
\begin{proposition}[Diffusion Excursion Density]
Let $Z_t$ satisfy an SDE with drift $\mu$ such that $Z_t$ is recurrent at zero.
Then the density of the excursion lengths $\tau$ of $Z_t$ is given by:
\begin{equation}
p_{Z}(\tau) = p_{e}(\tau; \delta)\mathbb{E}_{\mathbb{Q}_\updownarrow^\delta}\left[\exp\left(\int_0^\tau \mu(e_t,t;\theta)\mathrm{d}e_t - \frac12 \int_0^\tau \mu^2(e_t,t;\theta) \mathrm{d}t \right) \right].
\end{equation}

\end{proposition}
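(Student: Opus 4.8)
The plan is to prove the identity by a Girsanov change of measure between the drifted diffusion law $\mathbb{P}^\mu$ and the reference $\delta$-excursion law $\mathbb{Q}_\updownarrow^\delta$, followed by a disintegration (Bayes' rule for conditional expectations) that cleanly separates the baseline excursion-length density from the drift correction. First I would write the latent SDE in its unit-volatility form $\mathrm{d}Z_t = \mu(Z_t,t)\,\mathrm{d}t + \mathrm{d}W_t$, which is legitimate under the assumption $\sigma=1$, and invoke Novikov's condition (Assumption 3) to guarantee that
\[
M_\tau = \exp\left(\int_0^\tau \mu(e_t,t)\,\mathrm{d}e_t - \frac12\int_0^\tau \mu^2(e_t,t)\,\mathrm{d}t\right)
\]
is a genuine martingale and hence a valid Radon--Nikodym density $\left.\tfrac{\mathrm{d}\mathbb{P}^\mu}{\mathrm{d}\mathbb{Q}_\updownarrow^\delta}\right|_{\mathcal{F}_\tau} = M_\tau$ along a single excursion run up to its length $\tau$, following the scheme of~\citet[Section A.1]{ichiba2011efficient}.

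Next I would apply the change-of-measure (Bayes) formula to the \emph{law of the excursion length} itself. For any measurable set $A$ of lengths,
\[
\mathbb{P}^\mu(\tau \in A) = \mathbb{E}_{\mathbb{Q}_\updownarrow^\delta}\!\left[\mathbf{1}_{\{\tau \in A\}}\,M_\tau\right] = \int_A \mathbb{E}_{\mathbb{Q}_\updownarrow^\delta}\!\left[M_\tau \mid \tau = t\right]\,\mathbb{Q}_\updownarrow^\delta(\tau \in \mathrm{d}t),
\]
so disintegrating against the baseline length law yields the pointwise density
\[
p_Z(t) = p_e(t;\delta)\,\mathbb{E}_{\mathbb{Q}_\updownarrow^\delta}\!\left[M_\tau \mid \tau = t\right],
\]
where I identify the baseline factor $\mathbb{Q}_\updownarrow^\delta(\tau \in \mathrm{d}t)/\mathrm{d}t$ with the Brownian $\delta$-excursion length density $p_e(\tau;\delta)$ of~\eqref{eq:hitting_bm}. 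Since $e_t$ is by definition a Brownian excursion \emph{of length} $\tau$, the conditioning on $\{\tau = t\}$ is already absorbed into the expectation $\mathbb{E}_{\mathbb{Q}_\updownarrow^\delta}[\cdot]$ over excursions of the prescribed length, and the product form of the proposition follows directly.

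The hard part will be justifying the change of measure at the level of It\^o's excursion theory rather than ordinary path space. It\^o's excursion measure is $\sigma$-finite rather than a probability, and the excursion $e^\ell$ straddling an inverse-local-time interval $(\tau_{\ell^-},\tau_\ell)$ is a \emph{pinned} object conditioned to leave and return to $0$, so Girsanov cannot be applied naively to an unconditioned Brownian motion. I would resolve this by restricting to the normalized law of $\delta$-excursions, which is a bona fide probability measure precisely because Assumption 2 renders the excursion-counting measure finite, and by checking that the stochastic integral $\int_0^\tau \mu\,\mathrm{d}e_t$ is well defined for the excursion semimartingale up to its random length using the $t$-continuous strong solution of Assumption 4. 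The recurrence hypothesis (Assumption 1, via the scale function $S$) ensures $\tau < \infty$ almost surely, which is what makes $p_Z$ a valid density; I would close the argument by verifying normalization and the consistency of the Girsanov exponential with the Master-formula description of the excursion point process.
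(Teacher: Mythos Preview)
Your proposal is correct and follows essentially the same route as the paper: define the Girsanov exponential $M_\tau$ as the Radon--Nikodym derivative $\mathrm{d}\mathbb{P}^\mu/\mathrm{d}\mathbb{Q}_\updownarrow^\delta$ under Novikov's condition, write $\mathbb{P}^\mu(\tau\in\mathrm{d}t)=\mathbb{E}_{\mathbb{Q}_\updownarrow^\delta}[M_\tau\,\mathds{1}(\tau\in\mathrm{d}t)]$, and factor out the baseline length density $p_e(\tau;\delta)$. If anything, your disintegration step and your remarks on why the $\delta$-excursion law is a genuine probability measure (and why the stochastic integral along the pinned excursion is well defined) make explicit technicalities that the paper's proof leaves implicit.
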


\begin{proof}
The proof follows a straightforward change of measure argument.
Performing the change of measure, we can write the Radon Nikodym derivative as
$\frac{ \mathrm{d} \mathbb{P}^{\mu}}{\mathrm{d} \mathbb{Q}}$ where $\mathbb{Q}_\updownarrow$ is the excursion length density;
$\mathbb{Q}_\updownarrow^{\delta}$ is the corresponding length density of $\delta$-excursions. 
Under Novikov's condition, we can write the exponential martingale as
\begin{equation}
\frac{ \mathrm{d} \mathbb{P}^{\mu}}{\mathrm{d} \mathbb{Q}_\updownarrow^{\delta}} = \exp \left ( \int_0^{\tau\bigwedge T} \mu(e_t,t) \mathrm{d}e_t - \frac12 \int_0^{\tau\bigwedge T} \mu(e_t, t)^2 \mathrm{d} t \right ).
\label{eq:radon}
\end{equation}
\noindent Now we have that 
\begin{align}
\nonumber \mathbb{P}^{\mu, \delta}(\tau \in \mathrm{d} t; \delta) &= \mathbb{E}_{\mathbb{Q}_\updownarrow^\delta}\left[ \frac{\mathrm{d} \mathbb{P}^\mu}{\mathrm{d} \mathbb{Q}_\updownarrow^{\delta}}  \mathds{1}(\tau \in \mathrm{d} t)\right] \\
\nonumber &= \mathbb{E}_{\mathbb{Q}_\updownarrow^{\delta}}\left[\exp \left ( \int_0^{\tau\bigwedge T} \mu(e_t,t) \mathrm{d}e_t - \frac12 \int_0^{\tau\bigwedge T} \mu(e_t,t)^2 \mathrm{d} t \right ) \mathds{1}(\tau \in \mathrm{d} t)\right] \\
&= \mathbb{E}_{\mathbb{Q}_\updownarrow^{\delta}}\left[\exp \left ( \int_0^{\tau\bigwedge T} \mu(e_t,t) \mathrm{d}e_t - \frac12 \int_0^{\tau\bigwedge T} \mu(e_t,t)^2 \mathrm{d} t \right ) \right] p_e(\tau; \delta)
\label{eq:com}
\end{align}
with the  $e_t$ being excursions of length $\tau$.
This results in the final PDF of $\tau$ as
\begin{equation}
p_{Z}(\tau; \delta) = p_{e}(\tau; \delta)\mathbb{E}_{\mathbb{Q}_\updownarrow^\delta}\left[\exp\left(\int_0^\tau \mu(e_t,t;\theta)\mathrm{d}e_t - \frac12 \int_0^\tau \mu^2(e_t,t;\theta) \mathrm{d}t \right) \right].
\end{equation}

\end{proof}

\subsection{Proof of Approximation Distance}
\label{sec:proof_approx}
\begin{proof}
First, by~\citet[Theorem 4]{gibbs2002choosing} note that the $1$-Wasserstein distance between measures with support on $[0, T]$ is bounded by the TV distance according to
$$
W_{1}(p_\star, p_0) \leq T \frac12 \| p_\star - p_0 \|.
$$
Now using Pinsker's inequality, we can bound the TV distance in terms of the log of the Radon-Nikodym derivative.
$$
T \| p_\star - p_0 \| \leq T \sqrt{\frac12 \int d p_0 \left[ \int_0^T \mu dX_t - \frac12 \int_0^T \mu^2 dt \right]}.
$$
Taking the supremum over all functions within the function space that satisfies the positive recurrence condition leads to the result. 
\end{proof}

\section{Details of Excursion Length Density Given by Laplace Transform}
\label{sec:laplace}
State-dependent excursion densities can be calculated by inverting the Laplace transform of the solution of a differential equation.
In practice, this is difficult since the inverse Laplace transform is numerically unstable and intractable in general.
We do not follow this approach in the paper but state the details of the approach here for completeness. We define the procedure in the following proposition based on~\cite{pitman1999laplace}:
\begin{proposition}[\cite{pitman1999laplace}]
\label{prop:laplace}
Let $Z_t$ be the solution to the SDE in~\eqref{eq:sde}. 
Then define the speed and scale measures
\begin{equation}
s'(x) = \exp \left( -\int_0^x 2 \mu(u)/\sigma(u) \mathrm{d}u \right), \quad m(x) = \frac{2}{\sigma(x) s'(x)}.
\label{eq:speed_scale}
\end{equation}
Additionally, define the operator $\mathcal{A}$ as
$$
\mathcal{A} := \frac{1}{m(x)}\frac{\mathrm{d}}{\mathrm{d}x}\frac{1}{s'(x)}\frac{\mathrm{d}}{\mathrm{d}x}.
$$
Then the Laplace transform of the hitting time distribution is given by solving the eigenvalue problem of 
$$
\mathcal{A} \phi(x) = \lambda \phi(x)
$$
for the function $\phi(x)$ with Wiener-Hopf factorization given by $\phi_\pm(x)$: 
\begin{equation}
\mathbb{E}_x\left [e^{\left(-\lambda \tau\left(Z_t\right)\right)} \right] = \left \{ \begin{array}{cc}
    \phi_{\lambda,-}(x_0)/ \phi_{\lambda,-}(x_1) & x_0 < x_1\\
     \phi_{\lambda,+}(x_0)/ \phi_{\lambda,+}(x_1) & x_0 > x_1
\end{array} \right .
\label{eq:laplace_FHT}
\end{equation}
Then, using the strong Markov property, the total duration of the excursion from $x_0$ to $x_1$ and $x_1$ to $x_0$ (assuming $x_1 > x_0$) is given by

\begin{equation}
\mathbb{E}\left [e^{\left(-\lambda \left ( \tau \right ) \left(Z_t\right)\right) }\right] = \frac{\phi_{\lambda, -}(x_0)}{\phi_{\lambda, -}(x_1)}\frac{\phi_{\lambda, +}(x_1)}{\phi_{\lambda, +}(x_0)}.
\label{eq:laplace_ex}
\end{equation}
\end{proposition}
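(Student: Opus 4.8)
The plan is to recognize $\mathcal{A}$ as the infinitesimal generator of $Z_t$ written in its canonical speed-and-scale coordinates, and then to invoke the classical Feller--Sturm--Liouville theory of one-dimensional diffusions. First I would observe that, writing $\frac{d}{ds}=\frac{1}{s'(x)}\frac{d}{dx}$ for the derivative with respect to the scale function and $\frac{d}{dm}=\frac{1}{m(x)}\frac{d}{dx}$ for the derivative with respect to the speed measure, the operator factors as $\mathcal{A}=\frac{d}{dm}\frac{d}{ds}$, which coincides with the infinitesimal generator $\mathcal{L}$ of the SDE in~\eqref{eq:sde}. This reduces the statement to a statement about eigenfunctions of $\mathcal{L}$.

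Next, fix a level $b$ and set $u(x)=\mathbb{E}_x[e^{-\lambda\tau_b}]$, where $\tau_b=\inf\{t>0:Z_t=b\}$. A Feynman--Kac / Dynkin argument — noting that $M_t=e^{-\lambda(t\wedge\tau_b)}u(Z_{t\wedge\tau_b})$ equals $\mathbb{E}_x[e^{-\lambda\tau_b}\mid\mathcal{F}_{t\wedge\tau_b}]$ and is therefore a martingale, whose drift must vanish by It\^o's formula — shows that $u$ solves the eigenvalue equation $\mathcal{A}u=\lambda u$ off $b$, subject to the normalization $u(b)=1$. This is a second-order linear ODE whose solution space is two-dimensional; for $\lambda>0$, classical Sturm--Liouville theory for diffusions yields, up to positive scaling, a unique positive nondecreasing solution $\phi_{\lambda,-}$ and a unique positive nonincreasing solution $\phi_{\lambda,+}$, which are precisely the Wiener--Hopf factors. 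The recurrence and Lipschitz regularity assumptions stated earlier guarantee these factors exist and exhibit finite boundary behavior at the edges of the state space.

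I would then select the admissible solution by monotonicity. When $x_0<x_1$, the hitting time of $x_1$ grows stochastically as the start point decreases, so $u$ is increasing toward $x_1$, forcing $u(x)=\phi_{\lambda,-}(x)/\phi_{\lambda,-}(x_1)$; symmetrically $u(x)=\phi_{\lambda,+}(x)/\phi_{\lambda,+}(x_1)$ when $x_0>x_1$, which gives the case-split formula~\eqref{eq:laplace_FHT}. For the round-trip excursion with $x_1>x_0$, I would decompose the duration at the hitting time of $x_1$ as $\tau=\tau^{\uparrow}+\tau^{\downarrow}$, the times from $x_0$ up to $x_1$ and back from $x_1$ to $x_0$; by the strong Markov property the two legs are independent given $Z_{\tau^{\uparrow}}=x_1$, so the transform factorizes as $\mathbb{E}_{x_0}[e^{-\lambda\tau^{\uparrow}}]\,\mathbb{E}_{x_1}[e^{-\lambda\tau^{\downarrow}}]$. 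Substituting the increasing factor for the upward leg and the decreasing factor for the downward leg yields~\eqref{eq:laplace_ex}.

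The main obstacle I expect is the selection step: rigorously singling out the correct element of the two-dimensional kernel of $\mathcal{A}-\lambda$. This rests on the existence and uniqueness of positive monotone eigenfunctions together with a careful boundary classification (entrance versus natural) at the endpoints of the state space, since an inadmissible linear combination would blow up and fail to represent a bounded Laplace transform. By contrast, the generator identification, the martingale/Feynman--Kac verification, and the Markov factorization are comparatively routine once the admissible eigenfunctions $\phi_{\lambda,\pm}$ are in hand.
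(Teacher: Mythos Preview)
The paper does not supply its own proof of this proposition: it is stated with attribution to \cite{pitman1999laplace} and then used as a black box, with the subsequent text only illustrating the formula on the OU example rather than deriving it. So there is nothing in the paper to compare your argument against line by line.

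That said, your proposal is the standard classical derivation and is correct. Identifying $\mathcal{A}=\tfrac{d}{dm}\tfrac{d}{ds}$ with the generator, running the Feynman--Kac/martingale argument to obtain $\mathcal{A}u=\lambda u$ for $u(x)=\mathbb{E}_x[e^{-\lambda\tau_b}]$, invoking the existence of the monotone positive fundamental solutions $\phi_{\lambda,\pm}$, and selecting the admissible one by boundedness at the far boundary is exactly how this result is proved in the diffusion literature (e.g.\ It\^o--McKean, Borodin--Salminen). The strong Markov factorization for the round trip is likewise routine. Your identification of the ``selection step'' as the only genuinely delicate point is accurate: the rest is bookkeeping once $\phi_{\lambda,\pm}$ are in hand. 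One minor caveat worth flagging is that the speed and scale formulas as printed in the paper appear to carry $\sigma$ rather than $\sigma^2$; with the standard definitions $s'(x)=\exp\!\big(-\!\int_0^x 2\mu/\sigma^2\big)$ and $m(x)=2/(\sigma^2 s')$ your generator identification goes through verbatim, whereas with the paper's expressions as written it does not unless $\sigma\equiv 1$ (which is the paper's working assumption elsewhere).
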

Notably, the inversion of the Laplace transform numerically is unstable, leading to difficulties in recovering the density from an optimized $\mu$ using this technique.

As the drift and the variance terms in the SDE defined above are only state-dependent, it is not straightforward to depict the general Ito process as in Eq \ref{eq:sde}.
For the forward problem, a remedy is: first, start with deriving the first-hitting-time property of a simple SDE with only state-dependent drift and variance such as Brownian motion, linear drift Brownian motion, Bessel process, OU process \citep[Part II]{borodin1997handbook}, so that the eigenvalues and eigenfunctions can be solved easily with closed-form; then apply the Girsanov's theorem to extend the property for more general SDE. 
But for the backward problem, the Laplace transformation of the hitting time density does not usually have closed-form except for a few cases where the Laplace transform of the first-hitting-time density can be calculated in closed-form and can be decomposed into increasing eigenfunction and decreasing eigenfunction as in \eqref{eq:laplace_FHT}, solving SDE drift or variance is thus not straightforward.

Here we present one example of solving the drift function through the Laplace transform for the first-hitting-time problem (not yet the excursion problem).
The example illustrates the difficulty of formulating the diffusion process via the Laplace transform from the modeling perspective as opposed to the proposed likelihood strategy in the main text.
We use the OU process for the forward problem, which is already well-known. 
Next, we focus on the backward problem, i.e. recovering the drift of an SDE given the first-hitting-time density.
Let the process start from $x_0$ and the hitting boundary be a constant $\alpha$.
The target first-hitting-time density is
\begin{equation}
p(t) = \frac{2|\alpha-x_0|}{\sqrt{2\pi}} (e^{2t}-1)^{-\frac{3}{2}}
    \exp\left\{ 2t - \frac{(\alpha-x_0)^2}{2 e^{2t} - 2} \right\}.
\end{equation}
Next, we solve the corresponding SDE by setting $a(x) = 1$ with constant variance.
The Laplace transform of the density is,
\begin{equation}
\mathbb{E}_{x_0}[\exp(-\lambda H_\alpha) ]
= \begin{cases}
\frac{\exp(x_0^2) D_{-\lambda}(- \sqrt{2}x_0) }
    { \exp(\alpha^2) D_{-\lambda}(- \sqrt{2} \alpha) }
= \frac{\Phi_{-,\lambda}(x_0 )}{\Phi_{-,\lambda}(\alpha )}, 
    \quad \alpha > x_0 \\
\frac{\exp(x_0^2) D_{-\lambda}(\sqrt{2}x_0) }
    { \exp(\alpha^2) D_{-\lambda}(\sqrt{2} \alpha) }
= \frac{\Phi_{+,\lambda}( x_0 )}{\Phi_{+,\lambda}( \alpha )}, 
    \quad x_0 > \alpha \\
\end{cases}
\label{eq:OU_fpt_laplace}
\end{equation}
where
\begin{align*}
D_{-\lambda}(x) :=& \frac{1}{\Gamma(\lambda)} e^{-\frac{x^2}{4}}
    \int_0^\infty t^{\lambda-1} e^{-x t - \frac{t^2}{2}} \mathrm{d}t
\end{align*}
which has the following properties
\begin{align*}
\frac{\mathrm{d} D_{-\lambda}(x) }{\mathrm{d} x}
=& -\frac{x}{2} D_{-\lambda}(x) - \lambda D_{-\lambda-1}(x)
    = \frac{x}{2} D_{-\lambda}(x) - D_{-\lambda+1}(x) \\
\lambda D_{-\lambda-1}(x) 
=& - x D_{-\lambda}(x) + D_{-\lambda+1}(x) \\
\frac{\mathrm{d}}{\mathrm{d} x} e^{x^2} D_{-\lambda}( \sqrt{2} x)
=& 2x e^{x^2} D_{-\lambda}( \sqrt{2} x) 
    + e^{x^2} D_{-\lambda}'( \sqrt{2} x) \\
=& 2x e^{x^2} D_{-\lambda}( \sqrt{2} x) 
    + \sqrt{2} e^{x^2} \left(- \frac{\sqrt{2} x}{2} D_{-\lambda}(\sqrt{2}x) 
    - \lambda D_{-\lambda-1}(\sqrt{2}x)  \right) \\
=& - \sqrt{2} \lambda e^{x^2}  D_{-\lambda-1}(\sqrt{2}x) \\
\frac{\mathrm{d}^2}{\mathrm{d} x^2} e^{x^2} D_{-\lambda}( \sqrt{2} x)
=& - \sqrt{2} \lambda \frac{\mathrm{d}}{\mathrm{d} x} e^{x^2}  D_{-\lambda-1}(\sqrt{2}x)
= 2 \lambda (\lambda + 1) e^{x^2} D_{-\lambda-2}(\sqrt{2}x)
\end{align*}

According to Eq \eqref{eq:laplace_FHT} and \eqref{eq:OU_fpt_laplace}, the corresponding function can be defined as,
\begin{equation}
\Phi_{+,\lambda}(x) := e^{x^2} D_{-\lambda}( \sqrt{2} x).
\end{equation}
As defined above, $D_{-\lambda}(x) > 0$, and the derivative of $\Phi_{+,\lambda}(x)$ is negative, so the function is decreasing satisfying the Laplace transformation decomposition.
Taking this back to the infinitesimal generator eigenfunction in Eq \ref{eq:speed_scale}, we have the equality below for $\lambda \neq 0$
\begin{gather*}
\lambda (\lambda + 1) e^{x^2} D_{-\lambda-2}(\sqrt{2}x) 
- b(x) \sqrt{2} \lambda e^{x^2}  D_{-\lambda-1}(\sqrt{2}x)
= \lambda e^{x^2} D_{-\lambda}( \sqrt{2} x)  \\
\iff \quad 
- \sqrt{2}x D_{-\lambda-1}(\sqrt{2}x) + D_{-\lambda}(\sqrt{2}x)
- b(x) \sqrt{2} D_{-\lambda-1}(\sqrt{2}x)
=  D_{-\lambda}( \sqrt{2} x).
\end{gather*}
So the solution is $b(x) = -x$, and the SDE satisfies,
\begin{equation}
d Z_t = -Z_t d t + d W_t
\end{equation}
which is a standard OU. This is a simple example that the Laplace transform can be derived with closed-form and the solution is achievable and is only state-dependent. In general, such convenience is not guaranteed and there may not have a state-dependent solution.

\section{Experimental Details}
\label{sec:experiment_details}
\subsection{Drift Recovery}
The list of drifts $\mu$ for the multi-dimensional drift recovery experiments is provided in Table~\ref{tab:mult_d_exp}.
We first simulate the $d$-dimensional diffusion process then compute the times when each dimension has an excursion from the corresponding axis. 
The parameter $\delta$ is chosen to be zero, but due to discretization in simulation, an infinite number of excursions is not observed. 
We use $\Delta_t = 0.01$ and simulate until terminal time $T=10$.
The time points where the diffusion crosses the axis are recorded. 
For the excursion estimator, we consider both positive and negative excursions which we simulate using a Bernoulli random variable with a probability of $0.5$.
We compare this to the bridge estimator that interpolates between zeros using Brownian bridges. 
The hyperparameters for both models are equal to ensure a fair comparison.
We train for 200 epochs, with a learning rate of $1 \times 10^{-3}$ using the Adam optimizer.
The architecture is a 64 width, 6 depth multi-layer perceptron with \texttt{Softplus} activation function.

\paragraph{Kernel Coefficient Recovery for Exogenous and History Dependent Drift}
For these experiments, we consider an OU process influenced by either an exogenous process or the history by defining the drift as $ \mu = -x + w\varphi(\,\cdot\,)$ with $\varphi(z) = \exp( - z / \sigma)$. 
We choose $\sigma = 1$ for the exogenous dependence and $\sigma = 2$ for the history dependence.
For the both experiments, we simulate using Euler-Maruyama with $\Delta t=0.05$ up to $T = 50$ and observe the zero times.
The exogenous signal is generated according to 100 samples from a uniform random variable as $\mathcal{U}(0, 40)$ and sorted by the value.
For these experiments, we only optimize over the estimated parameter $\hat{w}$ due to the lack of identifiability if both components are unknown. 
We use the same parameters and only compare the standard SDE regression using Brownian bridges and the excursion based approach.
The models are trained using the \texttt{AdamW} optimizer with a learning rate of $1.0 \times 10^{-2}$.
Figure~\ref{fig:history_path} illustrates a sample path with dependence on the times when it reaches zero.

\begin{table}[]
    \centering
    \begin{tabular}{lll}
    Experiment & $\mu(X_t) $  \\ \toprule
       Cubic  &  $ = -X_t^3$ & $ X_t \in \mathbb{R}^{10}$ \\
       Circle & $ = \begin{cases}
       -X_{1,t} - X_{2,t} \\
       -X_{2,t} + 5 X_{1,t}
       \end{cases}$  & $ X_t \in \mathbb{R}^2$ \\
       Tanh & $= -\tanh(X_t) $ & $  X_t \in \mathbb{R}^{10}$ \\
       OU & $ = -X_t $ & $  X_t \in \mathbb{R}^5$
    \end{tabular}
    \caption{Table of drifts for multi-dimensional experiments. }
    \label{tab:mult_d_exp}
\end{table}

\subsection{Renewal Processes}
\begin{figure*}[h]
    \centering
    \includegraphics[width=0.23\textwidth]{figs/sampling/qq-exp.pdf}
    \includegraphics[width=0.23\textwidth]{figs/sampling/qq-weibull.pdf}
    \includegraphics[width=0.23\textwidth]{figs/sampling/qq-gamma.pdf}
    \includegraphics[width=0.23\textwidth]{figs/sampling/qq-lognormal.pdf}\hfill
    
    \includegraphics[width=0.23\textwidth]{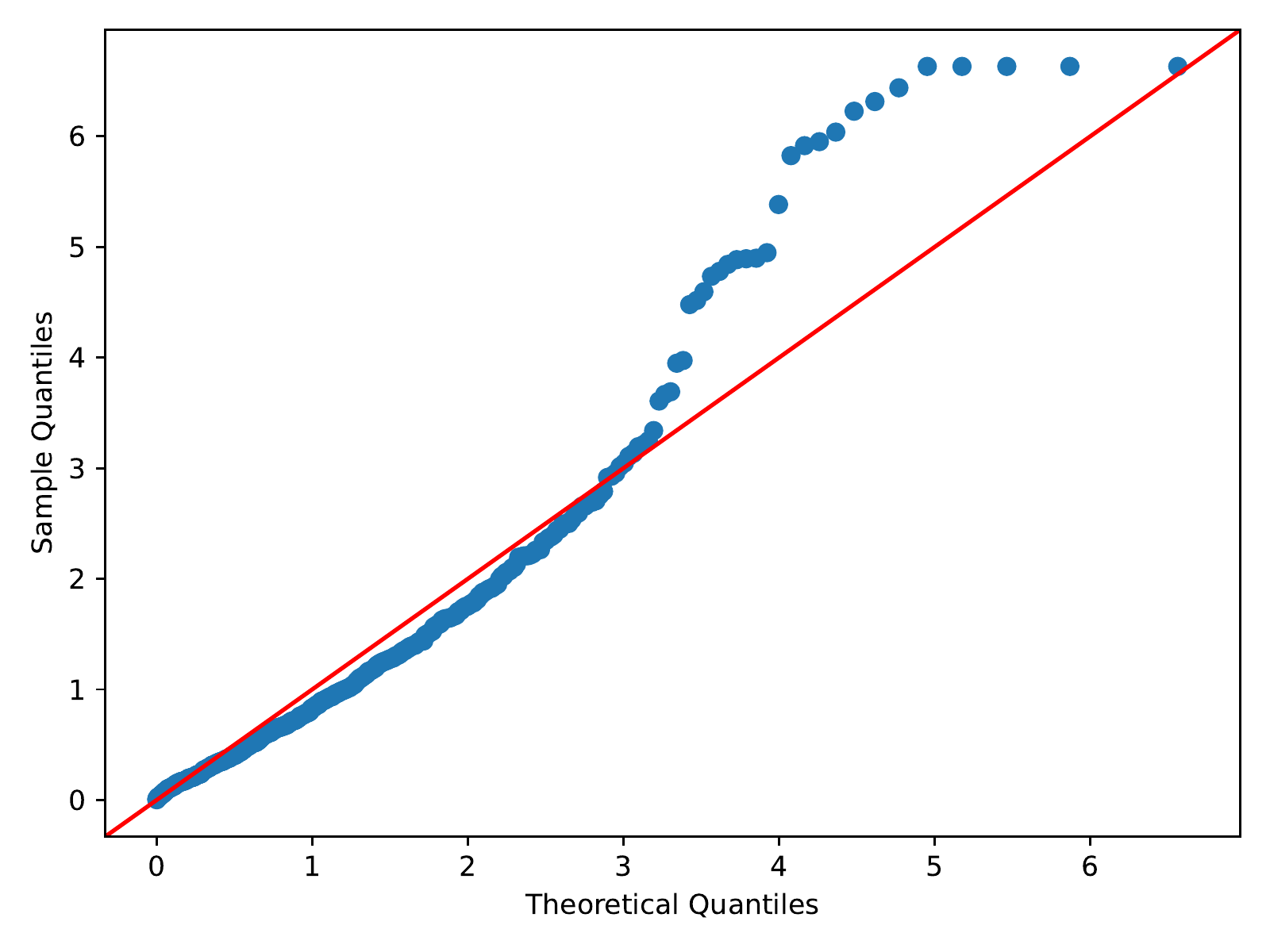}
    \includegraphics[width=0.23\textwidth]{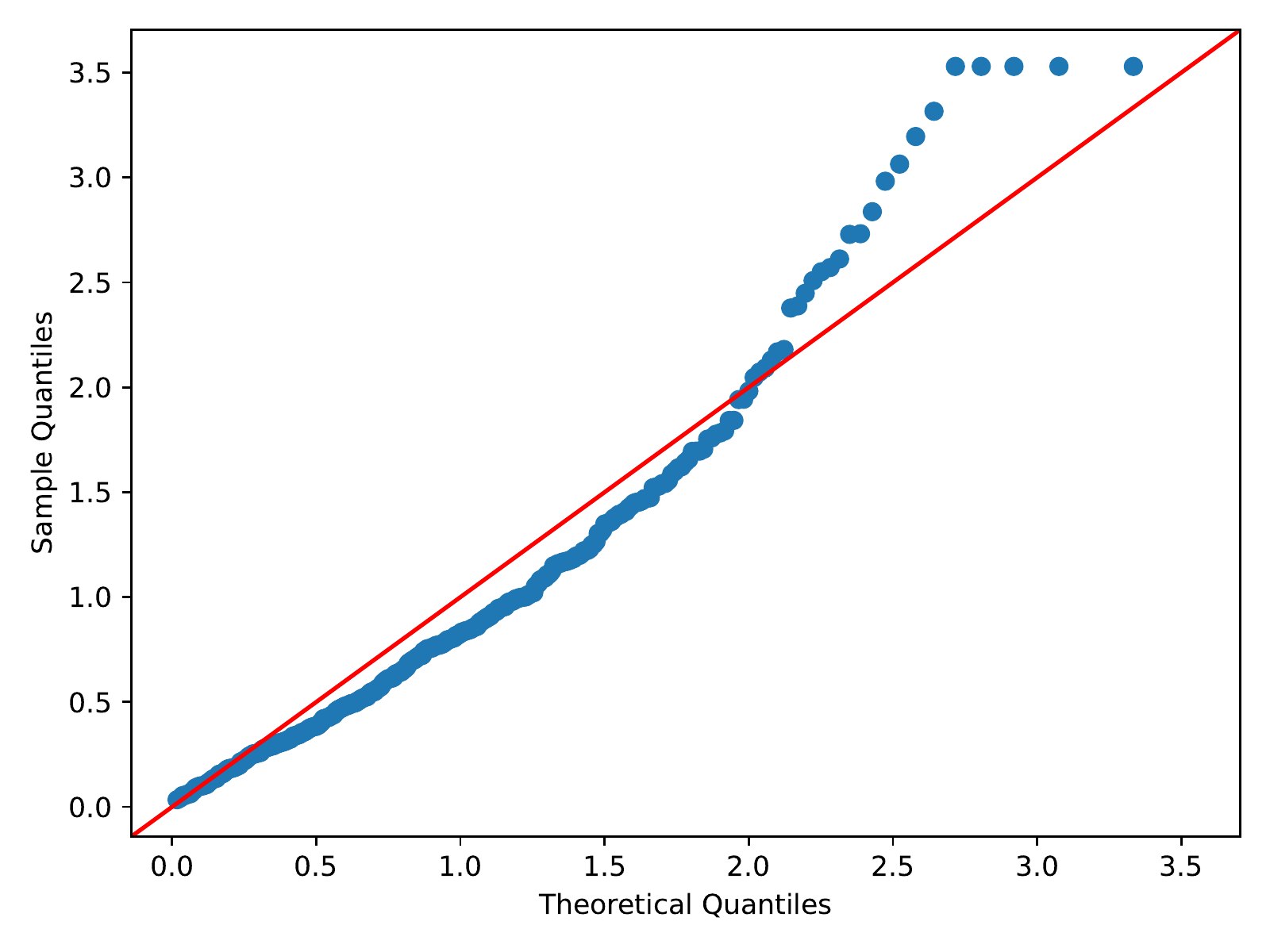}
    \includegraphics[width=0.23\textwidth]{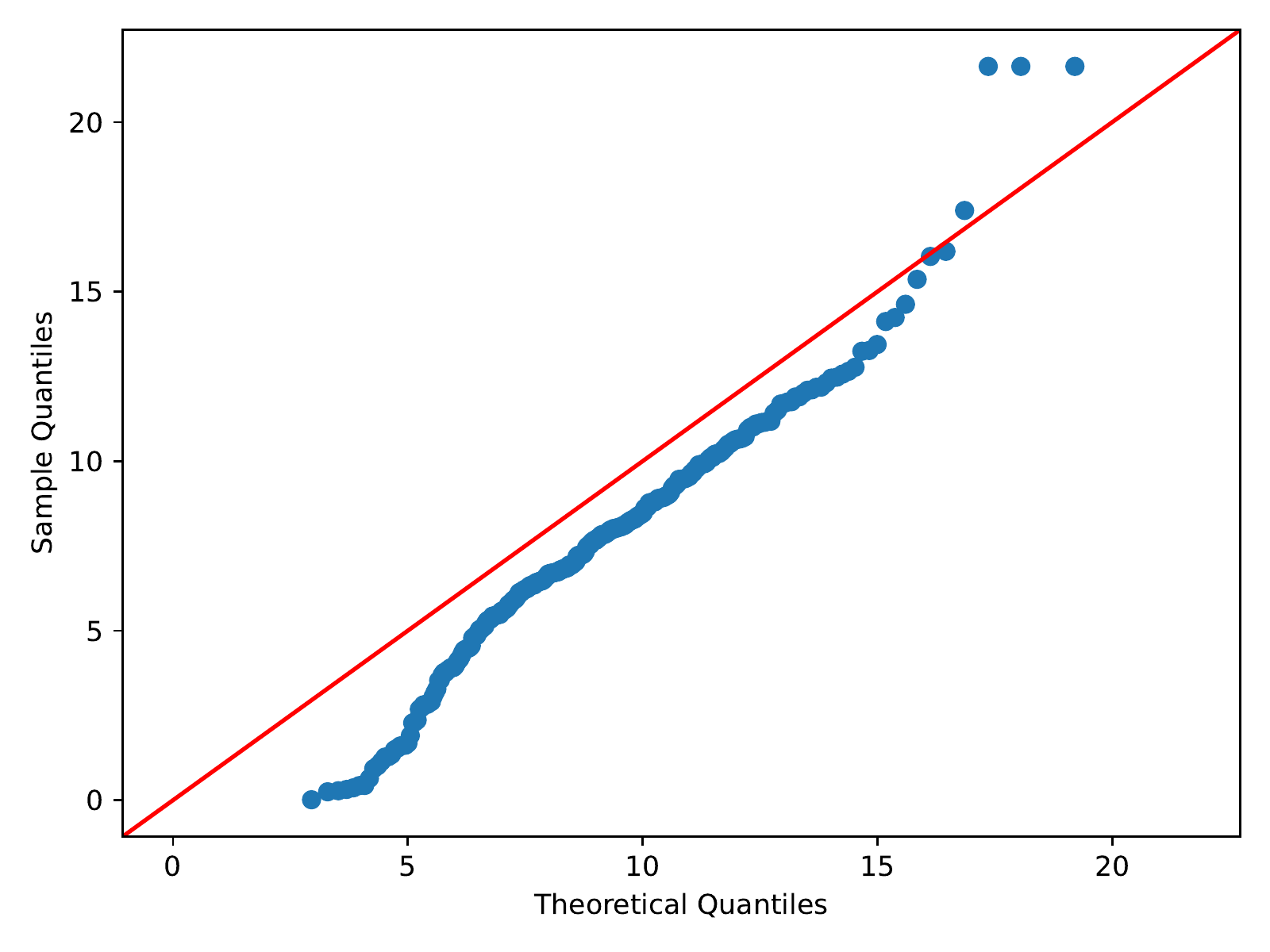}
    \includegraphics[width=0.23\textwidth]{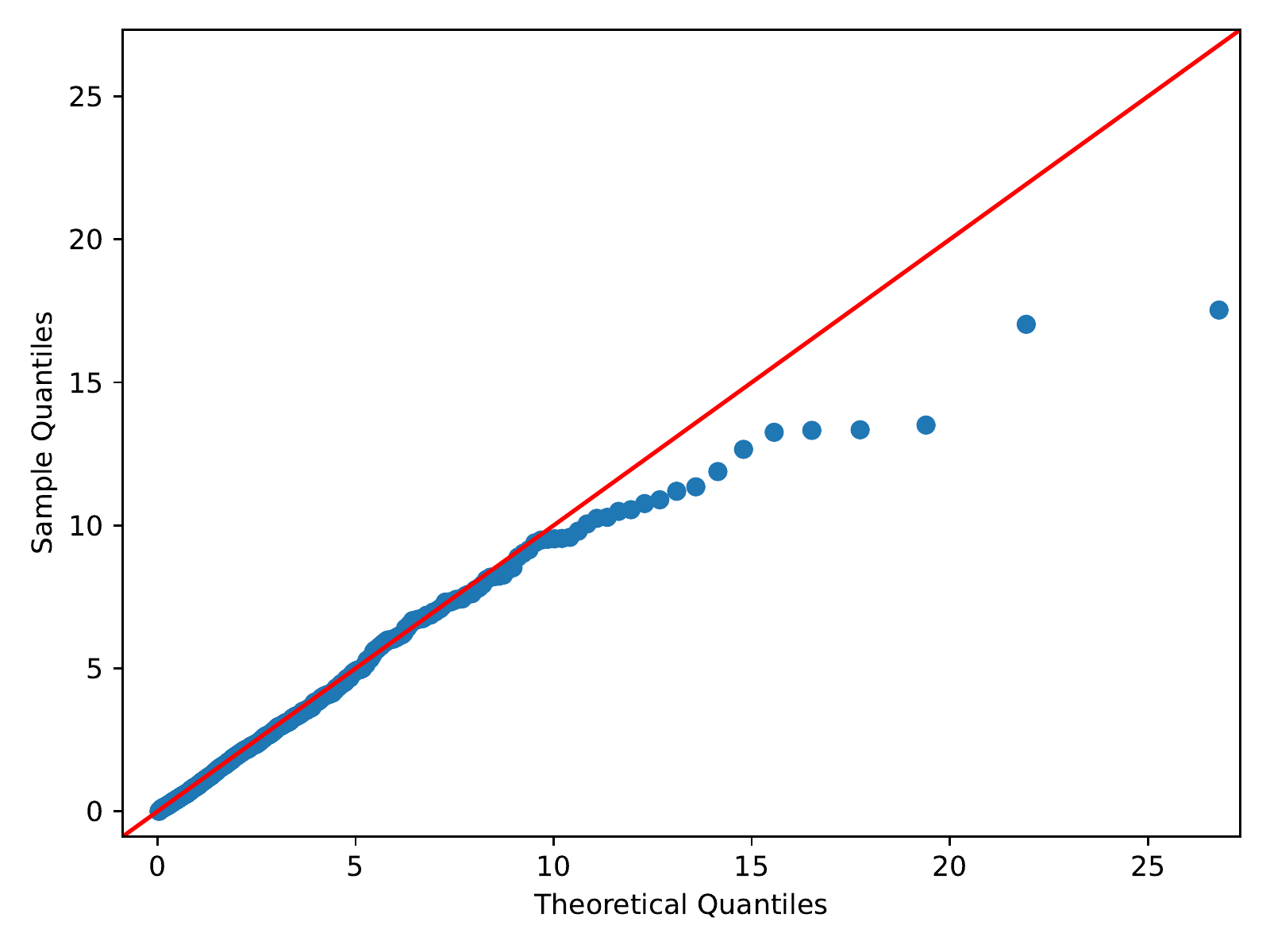}
    \caption{Comparison of QQ plots for Poisson process, Weibull, Gamma, and log-normal renewal processes. The top row are results from our excursion-based estimator. The bottom row are results from a intensity-based estimator. All are fit with the same architecture, optimization parameters and 200 samples from the specified renewal process.}
    \label{fig:qq_baseline}
\end{figure*}
We consider four canonical renewal processes with parameters described in Table~\ref{tab:renewal}.
The architecture is a basic MLP with width 16 and 6 layers and the \texttt{Softplus} activation function.
They were trained for 2000 epochs using the \texttt{AdamW} optimizer with learning rate of $1 \times 10^{-3}$.
The learning rate for the $\delta$ parameter was $1 \times 10^{-2}$.
For reference on how the proposed estimator is performing compared to a standard deep learning-based point process estimator, we provide~\autoref{fig:qq_baseline} which compares the performance of an intensity-based model~\citet{shchur2019intensity} to our excursion-based model.

\begin{table}[]
    \centering
    \begin{tabular}{ll}
    Distribution & Parameters \\ \toprule
        Exponential & $\lambda =1$ \\
        Gamma & $\alpha=9, \, \beta =1$ \\
        Log-Normal & $\mu=0, \,\sigma=1$ \\
        Weibull & $\lambda=1, \, k=1.5$
    \end{tabular}
    \caption{Renewal distributions with the corresponding parameters for the experiments in Figures~\ref{fig:qq}and~\ref{fig:qq_baseline}.}
    \label{tab:renewal}
\end{table}

\subsection{Real Data}  \label{subsec:real_data}

For the real data experiment, we consider a modification of the neural network that includes a positional encoding layer. 
This allows the network to learn higher frequency functions~\citep{tancik2020fourier}.
We use a \texttt{LeakyReLU} activation function. 

\begin{figure}
    \centering
\includegraphics[width=0.48\textwidth]{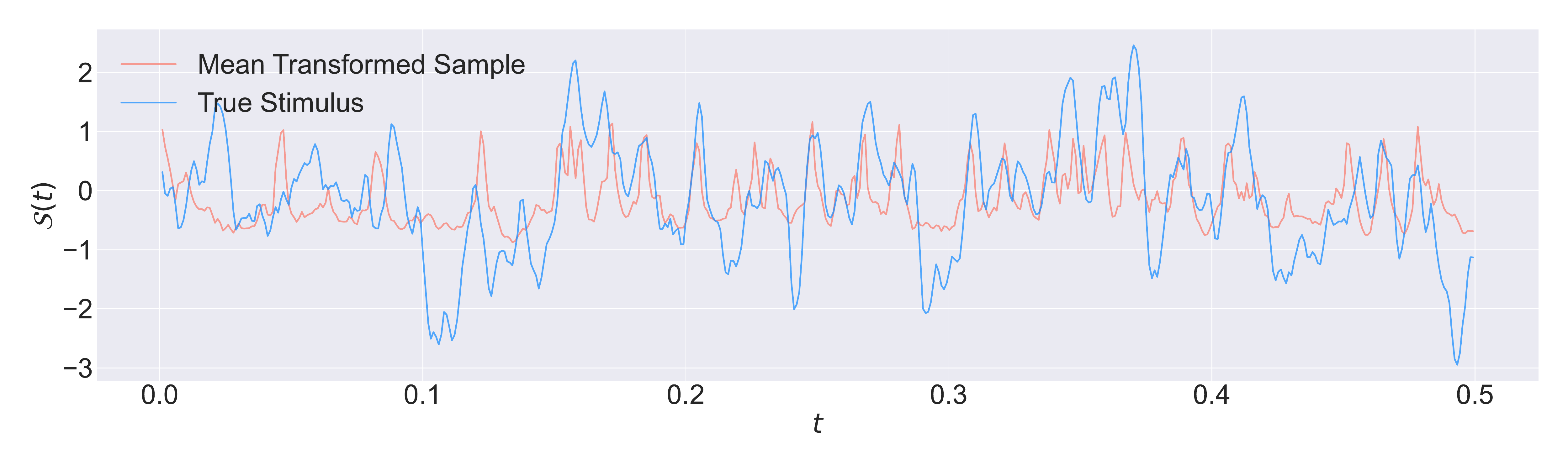}
\includegraphics[width=0.48\textwidth]{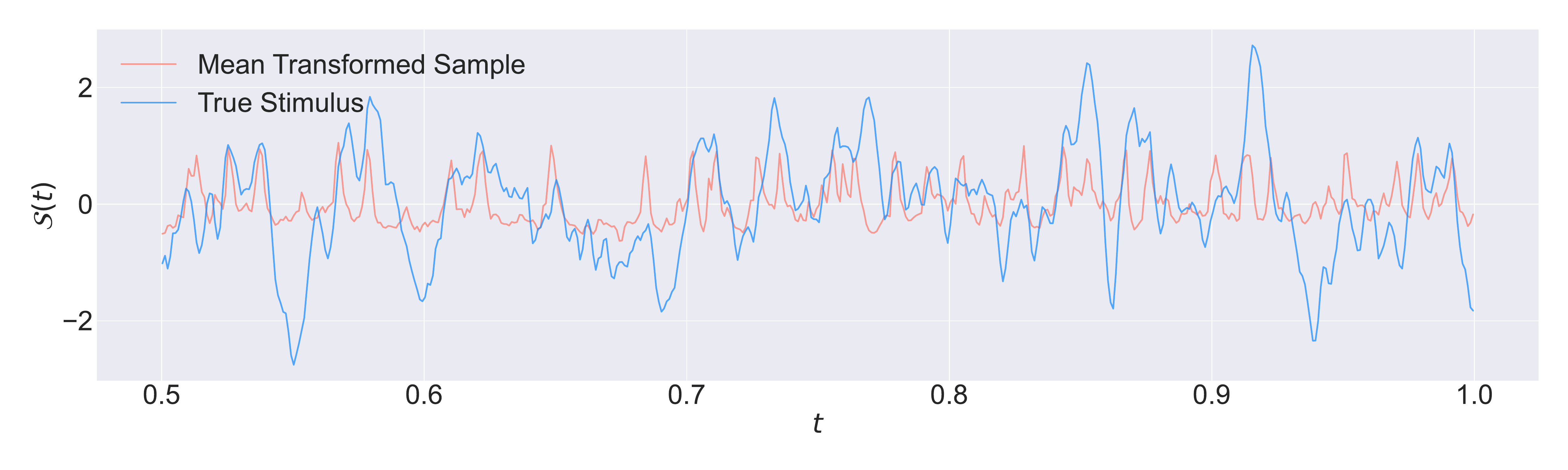} 
\includegraphics[width=0.48\textwidth]{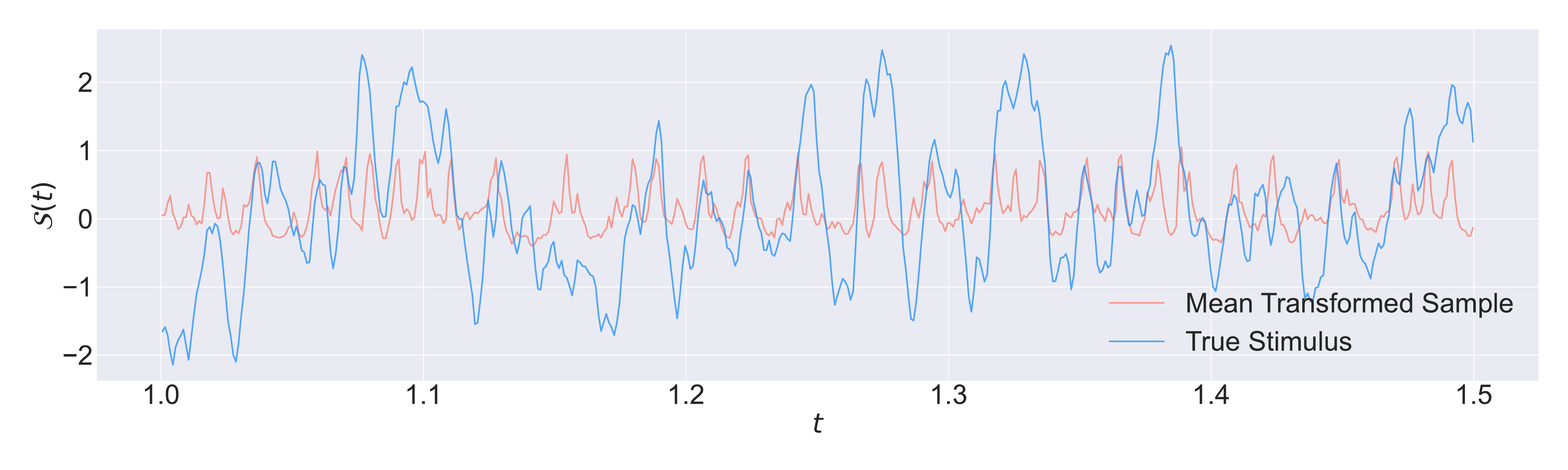}
\includegraphics[width=0.48\textwidth]{figs/learned_stim_4.pdf}
    \caption{Scaled learned stimulus for all time points.}
    \label{fig:learned_stim_all}
\end{figure}

\paragraph{Dataset Description}
The neuroscience dataset is the same as \citep{tripathy2013intermediate}, which is composed of \textit{in vitro} whole-cell patch clamp recordings of mitral cells from mouse olfactory bulb slices.
Spikes were recorded with 100 trials of a 2 seconds duration by simulating the neuron with repeated frozen noise current, which is generated by convolving a white-noise with an alpha function with $\tau = 3$ ms.

\paragraph{Transforming the Learned Stimulus}

Since the proposed method models the continuous as excursions, we invoke a transformation such that the zeros are peaks. 
In particular, we compute $\tilde{Z_t} = a(\mathbb{E}[\log Z_t]) + b$ where $a, b$ are found using least squares. 
Note that due to the $\log$, this will result in $\tilde{Z_t}$ being either positive or negative.
In that sense, we cannot recover the sign of the excursion unless we're provided additional information. 
However, the main components we want to match are the peaks of the true stimulus and the learned stimulus. 
Figure~\ref{fig:learned_stim_all} shows that this is effectively accomplished. 

\section{Further Modeling Considerations}

\subsection{Incorporating History} \label{subsec:history_dependent}
\begin{figure}[h]
\centering
\includegraphics[width=0.4\textwidth]{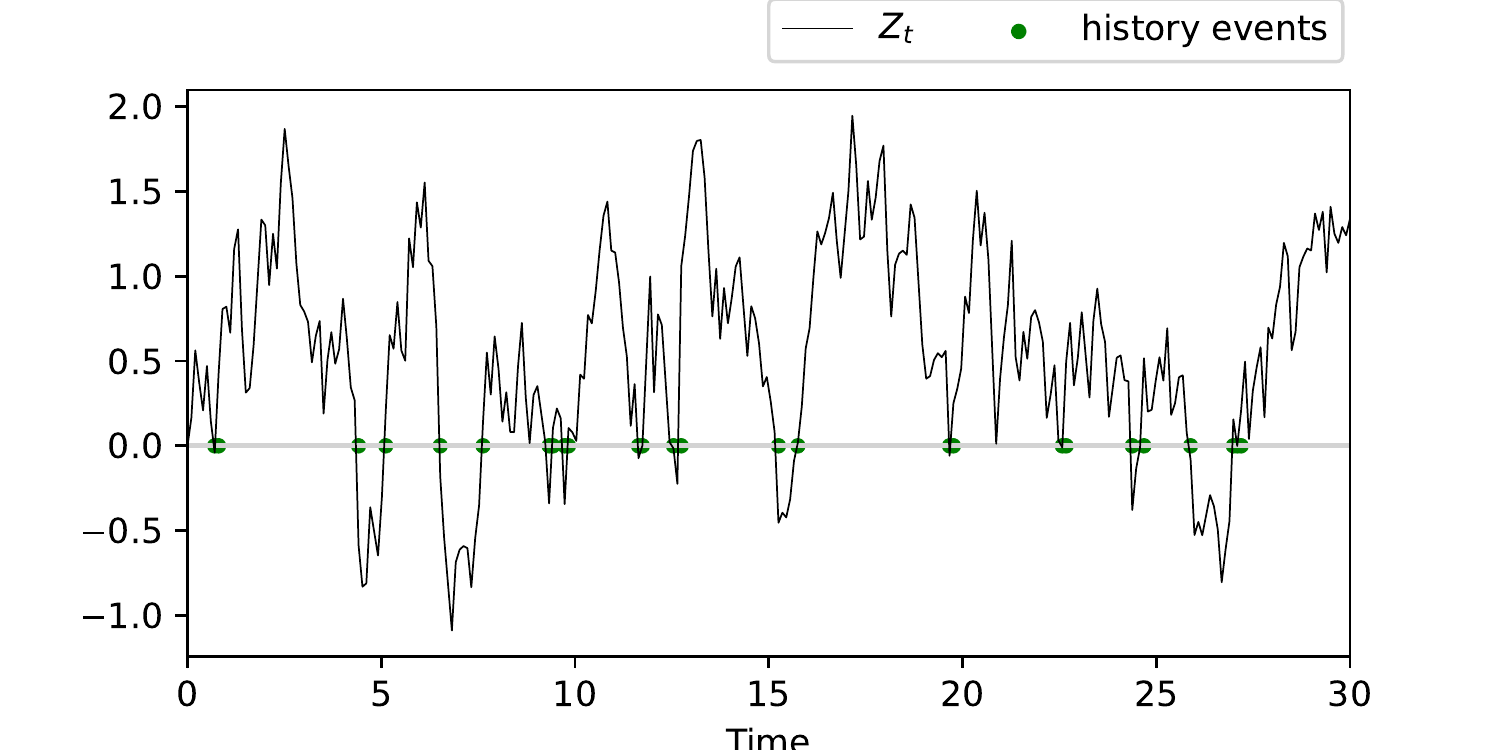}
\caption{Example of a sample path affected by its history.}
\label{fig:history_path}
\end{figure}

In practice, many point processes depend on the distribution of the history for a certain window.
We can include this in the model by making the drift dependent on the history by, for example, considering the structure of the drift to be a recurrent neural network.
Then the likelihood becomes a function of the history with the change of measure being the same as in~\eqref{eq:com}.
Specifically, we can change the drift function from $\mu(Z_t, t)$ in \eqref{eq:sde} to
$\mu(Z_t, t, \mathcal{H}_t)$, where $\mathcal{H}_t$ represents the history of the process up to time $t$. $\mathcal{H}_t$ can be the history of the whole path of $Z_t$, or it can be the history of discrete events 
$\{s \;|\; d L^{f}(s) = 1, s < t \}$, or subset of the history events.
The challenge of the model extension raises from more complicated modeling of the drift function. We leave such extension as the future work.

To illustrate the point, here we consider a special example of history-dependent SDE, which mimics the intensity function of the Hawkes process,
\begin{align}
d Z_t =& \mu(t, \mathcal{H}_t) d t + d W_t \\
\mu(t, \mathcal{H}_t) =& \mu_0 t + h(t - t_{h,1} ) + h(t - t_{h,2})
\end{align}
where $\mu_0 t$ is the baseline drift with a constant slope, and the drift function depends on the last event $t_{h,1}$ and the last second event $t_{h,2}$, so the diffusion process is history-dependent. $h(\cdot)$ is the kernel function describing how past events affect the future drift. In the example, the process only depends on the past two events before $t$. For example, if $h(t) = e^t \cdot \mathbb{I}(t \geq 0)$, the past events will have a positive influence on the drift, and such influence decays as the events stay further back.
As shown in the examples in Appendix \ref{appendix:connection_intensity}, the linear Hawkes process is not equivalent to this problem, even though both of them consider the additive history effects.
An example of a sample path exhibiting this behavior is given in Figure~\ref{fig:history_path}.

\subsection{Connection to More General Point Processes } \label{subsec:connection_point_process}
In this section, we illustrate the modeling connection between the diffusion-based framework and the intensity-based framework.
The bridge between the two groups of methods is the conditional density of the next event given the history up to the current time.
For the intensity-based model, the conditional density of the next event is~\citep{daley2003introduction},
\begin{align*}
g\left(t \mid t_1, \ldots , t_{n-1} \right)& = \lambda\left(t \mid t_1, \ldots, t_{n-1}\right) \times 
\exp \left \{ - \int_{t_{n-1}}^t \lambda \left(u \mid t_1, \ldots , t_{n-1} \right) \mathrm{d}u \right \}
\end{align*}
for $t \geq t_{n-1}$ and where $\lambda$ is the conditional intensity function, 
$t_i$ are timestamps,
$g$ is the density of the next event.
Reversely, given the density $g$, the intensity function can also be derived. See Appendix \ref{appendix:connection_intensity}.
For the diffusion-based model, suppose the point process is generated by an underlying SDE, where the events are a sequence of first-hitting times that reset the process to the initial level after every hit. 
Let $g(\tau), \tau \in \mathbb{R}_+$ be the density of the next event, assuming the current event occurs at $t=0$ ($g(\tau)$ can be different for every hitting time).
For the next event, there exists an SDE with a time-varying drift function $\mu(t)$ such that its first-hitting-time density is $g$ (again, $\mu(t)$ can be different for every hitting time).
The equivalent model is the driftless diffusion, such as Brownian motion, but with a time-varying boundary (the boundary can be different for every hitting time).
If the drift function or the boundary depends on the history, the model will include history dependency.
We leave more detailed discussions and numerical examples, such as the renewal process and Hawkes process, in Appendix \ref{appendix:connection_intensity}.
\subsection{Interpreting the Learned Drift}
A byproduct of the proposed method is a parameterization of a diffusion that generates the data. 
In particular, the drift that is recovered can be subsequently analyzed using traditional methods from stochastic calculus.
In general, the method should be used in cases where such interpretation is useful, such as in health care or in finance. 
The drift may provide clinically useful insight into the distribution of action potentials for a diseased versus healthy patient.
The drift provides an understanding at a multiscale level, leading to potential therapies that correspond to modulating the drift function.

In a similar example as described in the introduction, bursty transcription could be investigated again at a finer, molecular level.
The drift can correspond to some production rate or movement of the underlying molecules which can again lead to potential pathways for developing medications.

Finally, in a financial setting, arrivals of aks orders in a market may be related to excursions of the drawdown process of the perceived fair price. 
As the fair price reaches the running maximum, the drawdown process reaches zero, signaling an appropriate time to sell the financial instrument.
Developing and analyzing the drift of the latent fair price could lead to better risk management for market makers or more effective trading strategies. 

\section{Example Applications}
\label{sec:examples}
To motivate the proposed model, we include a few examples where the method may be appropriate and an intuitive interpretation is present.
We study some of these experiments in greater detail by considering the representational capabilities of the method in describing the data.

\paragraph{Fair Pricing from Bids and Asks of Illiquid Assets.}
Suppose we observe sets of point processes for a given asset on an exchange. 
Denote the set of samples associated to bids as $\mathbb{B} = \{t_i^{(b)}, b_i\}_{i=1}^N$ and the process for asks as $\mathbb{A} = \{t_i^{(a)}, a_i\}_{i=1}^M$. 
We assume the following properties of observations based on a latent fair price, $Z_t$:
\begin{enumerate}
    \item Bids are generated when $Z_t < \mathbb{E}\left[Z_t \, \bigg | \, Z_{t_i^{(b)}} = b_i\right] - \delta$ for $t_i^{(b)} < t$. 
    That is, the fair price should not exceed the expected fair price following the last bid.
    \item The fair price does not cross the expectation of the diffusion bridge between any two arrivals.
\end{enumerate}
$Z_t$ satisfies an SDE with unknown drift $\mu(Z_t, t)$, we consider the excursions above and below the curves:
\\
\begin{tabularx}{\textwidth}{XX}
{\begin{align*}
    f_\text{bid} &= \mathbb{E}[Z_t \mid Z_{t_i^{(b)}} = b_i] \\
    &= \int_{t_i^{(b)}}^t\mu(z, t) dt + b_i 
\end{align*} 
}&{
\begin{align*}
        f_\text{ask} &= \mathbb{E}[Z_t \mid Z_{t_i^{(a)}} = a_i] \\
    &= \int_{t_i^{(a)}}^t\mu(z, t) dt + a_i
\end{align*}
}
\end{tabularx}

Specifically, since $a_i \geq b_i$, $ f_\text{ask}(t) \geq f_\text{bid}(t)$ for all $t >0$. 
Qualitatively, the model suggests that a \emph{new bid} occurs when the fair price exceeds the expected value conditioned on starting at the \emph{last bid} price and the fair price $Z_t$ being at least $\delta > f_\text{bid}(t)$ within that interval with the opposite characterization for the process involving asks.
This is characterized by a $\delta$-excursion above $f_\text{bid}$ or below $f_\text{ask}$, leading to the interarrival time being the excursion length of the fair price above or below the curve. 
We illustrate this behavior and the generated point process in Figure~\ref{fig:bidask}.

\begin{figure}[h]
\vspace{-10pt}
\centering
\includegraphics[width=0.4\textwidth]{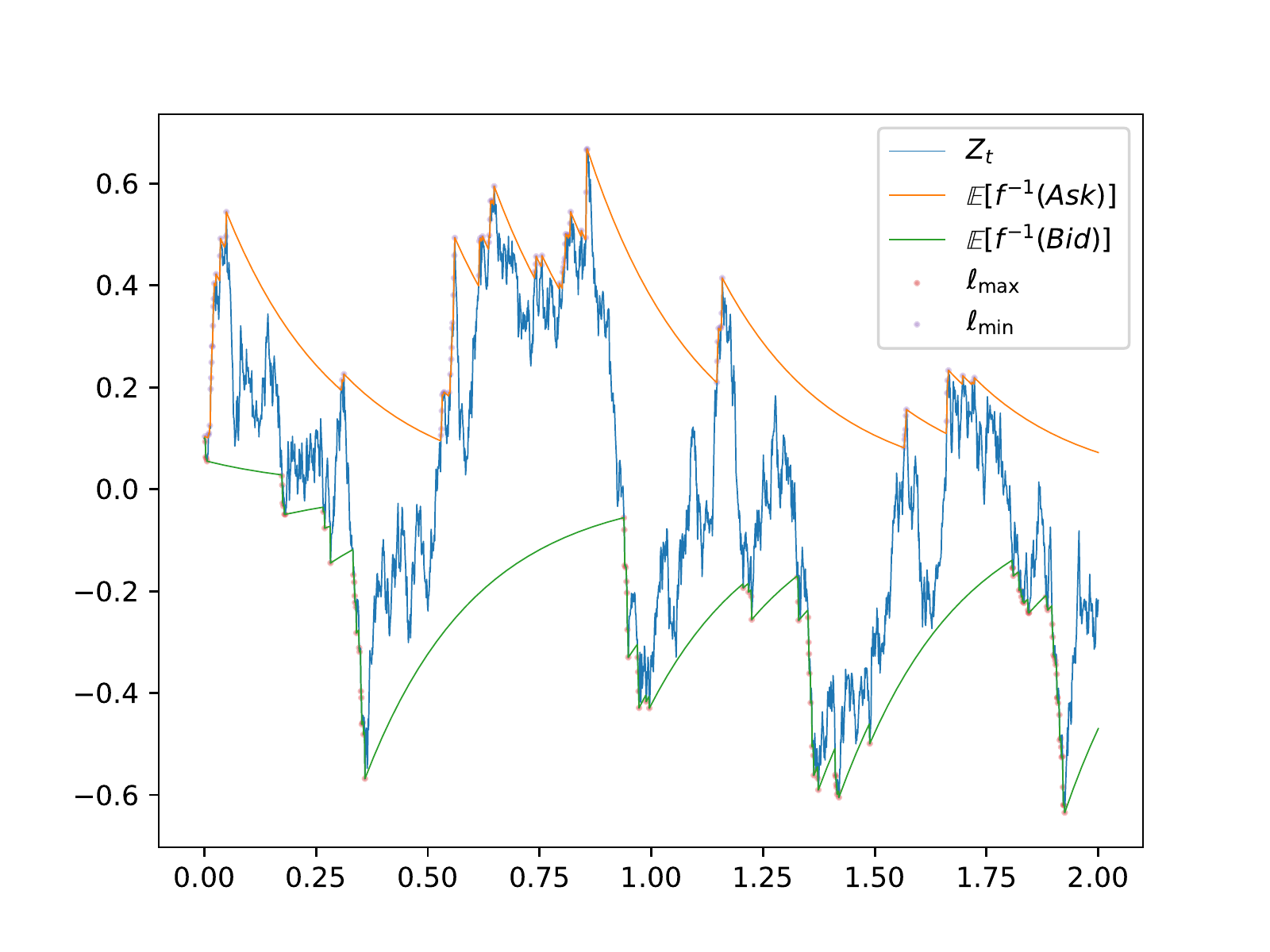}
\caption{Examples of bids and asks generated according to the intersection of the fair price (blue) with the expected price from the last bid (green) and ask (orange).
}
\label{fig:bidask}
\end{figure}

\paragraph{Heart Rate Variability}
In~\citet{barbieri2005point}, the authors consider a renewal process for modeling heart rate variability based on a drifted Brownian motion.
We consider a similar model but based on excursions reaching a minimum height. 
This allows us to make an interpretation of the latent path being a continuous path instead of discontinuous as in the first hitting time case. 
The continuous path should resemble the ensemble of cardiac action potentials that generate the electrical signal whose peaks are the arrival times we observe. 
Analyzing the drift that governs the excursions can then be used for qualitative reasoning about the distributions of action potentials. 

\paragraph{EDA Data}
On another front considering electrophysiology, ~\citet{subramanian2020point} describes a point process structure that describes the spiking properties in electrodermal activity (EDA) data. 
The authors posit the existence of a latent Brownian motion with first crossing times that are distributed according to the observed data. 
The Brownian motion is seen to provide a description of the underlying electrophysiology related to the release of sweat that in turn generates the points in the point process. 
We generalize this argument by considering a series of excursions above a threshold limit. 
This provides a continuous description relating to the release of sweat that generates the arrival of the point process.
The data are in the form of the arrival time of the pulse times derived from EDA data for the control group of patients (\textsc{EDA}), and for the patients under the influence of the drug Propofol (\textsc{EDA-P}). 

\section{Connection Between Diffusion-Based Model and Intensity-Based Model} \label{appendix:connection_intensity}

As introduced in section~\ref{subsec:connection_point_process}, the conditional intensity of the next event relates the two groups of modeling methods of the diffusion-based method and the intensity-based method.
Recall the relationship between the joint arrival time distribution and the intensity function is given by
$$
\lambda(t_n | t_1, \ldots, t_{n-1}) = \frac{p(t_n | t_1, \ldots, t_{n-1})}{1 - \int_{t_{n-1}}^{t_n}p(s | t_1, \ldots, t_{n-1})ds}.
$$
Additionally, we have that $p(t_n | t_{i < n}) = p(t_n, \ldots, t_1)/p(t_{n-1}, \ldots, t_1)$.
Numerically, when considering the proposed estimator, we must compute the ratio of two expectations given our observations. 
The variance of this becomes a bit high, but we can use common random numbers to reduce this. 
That is, we compute the same set of bridges for both expectations.
This leads to only computing the expectation of the martingale portion multiplied by the original measure.
Specifically, 
$$
\lambda(t_n | t_1, \ldots, t_{n-1}) =\lambda(t_n | t_{n-1}) = \frac{p(t_n - t_{n-1} | t_{n-1})}{1 - \int_{t_{n-1}}^{t_n}p(t_n - t_{n-1} | t_{n-1})}.
$$
Of course, if $\mu$ does not depend on $t$ then the conditional term is not needed.
We then approximate the integral in the denominator by a Riemann sum. 

On the other hand, the density of the next event can be derived from the intensity function.
\begin{equation} \label{eq:intensity_to_fpt}
p_n(t | t_1,..., t_{n-1}) = \lambda(t | t_1,..., t_{n-1}) 
\exp\left\{ - \int_{t_{n-1}}^t \lambda(u | t_1,..., t_{n-1}) \mathrm{d}u \right\}.
\end{equation}

Next, we provide a few examples.
Let $p(x, t | x_{t_{(n-1})}, 0)$ be the transition density of the dynamics $X_t$ start from the origin,
$g(u | x_{t_{(n-1})}, 0)$ be the density of the first hitting time of the process. 
Consider the random walk from $(x_{t_{(n-1})}, 0)$ to $(x_t, t)$ through first hitting point $(S(u), u)$.
By marginalizing out the first hitting time $(S(u), u)$ (Chapman–Kolmogorov equation), we have
\begin{equation}
p(x, t | x_{t_{(n-1})}, 0) = \int_0^t p(x, t | S(u), u) \cdot g(u | x_{t_{(n-1})}, 0) \mathrm{d}u.
\label{eq:fortet}
\end{equation}
The above is called the Fortet equation, which can be used to solve the density of the first hitting time \citep{sacerdote2013stochastic}. $g(u | x_{t_{(n-1})}, 0)$ depends on the design of $S(u)$.
In general condition, there exists such a boundary \citep{potiron2021existence}, which is proved using a piecewise linear representation that is also the technique used by \citep{sacerdote2003threshold}. The approximation accuracy can be arbitrarily high (the error is $O(h^2)$ where $h$ is the knot distance) \citep{zucca2009inverse}.

Take the target density $g$ into Eq \eqref{eq:fortet} to solve the boundary function $S$.
If $g$ includes the history of the process, such as the Hawkes process, the solution $S$ becomes a history-dependent and time-varying boundary function.
$p(x, t | x_{t_{(n-1})}, 0)$ and $ p(x, t | S(u), u)$ can be easily obtained using the basic property of Brownian motion.
\citep{sacerdote2003threshold} provides an approximation solution to the problem.

\section{Lamperti Transform} \label{sec:lamperti}
In this section, we show that if $\sigma$ is not a constant, the process can be transformed into an equivalent process with constant $\sigma$. So model in the main text  assumes the constant drift $\sigma = 1$ without loss of generality.
The transformation is described in \citet{ait2002maximum}.
Let $Z_t$ represent the original process satisfying \eqref{eq:sde}.
Let $Y_t := \gamma(X_t,t) = \int \frac{dx}{ \sigma(X_t, t)} $. 
From Ito's lemma,  
\begin{align*}
    dY_t &= \mu_Y(Y_t, t)dt + dW_t \\ 
    \mu_Y(Y_t, t) &= \frac{\partial \gamma}{\partial t} \left (\gamma^{-1}\left(Y_t, t\right), t\right) + \frac{\mu\left(\gamma^{-1}(Y_t, t)\right)}{\sigma\left(\gamma^{-1}\left(Y_t, t\right)\right)} - \frac12 \frac{\partial \sigma}{\partial t} \left ( \gamma^{-1}\left (Y_t, t \right ), t \right ) 
\end{align*}
has unit diffusion. 
To overcome identifiability issues, we assume that this transformation implicitly occurs, and we only wish to recover the unit variance process $Y_t$. 

The multivariate Lamperti transform is described  by \citet{ait2008closed}.
Let $\sigma(y)$ be a symmetric positive matrix for all $y \in \mathbb{R}^d$. 
Additionally let $\sigma$ be differentiable everywhere and for all $y$
$$
\frac{\partial \sigma(y)}{\partial y_k}\sigma(y)^{-1}e_j = \frac{\partial \sigma(y)}{\partial y_j}\sigma(y)^{-1}e_k
$$
where $e_j$ is the $j^{\text{th}}$ canonical basis vector of $\mathbb{R}^d$.
Then the SDE given by Lipshitz $\mu, \sigma$ is a reducible SDE with unit diffusion.

\end{document}